\newtheorem{thm}{Theorem}
\newtheorem{lem}[thm]{Lemma}
\newtheorem{cor}[thm]{Corollary}
\newtheorem{problem}{Problem}
\newcommand{\bigO}[1]{\mathcal{O}\left(#1\right)\xspace}
\newcommand{\SAG}{\mathsf{SAG}\xspace}
\DeclareMathOperator{\OG}{\mathsf{OG}}
\DeclareMathOperator{\COG}{\mathsf{COG}}
\DeclareMathOperator{\rlz}{\textrm{R}}
\DeclareMathOperator{\ncrlz}{\textrm{NR}}
\DeclareMathOperator{\sn}{\textrm{sn}}
\newcommand{\CAMSA}{\textsf{CAMSA}\xspace}
\newcommand{\Scafs}{\mathbb{S}}
\newcommand{\Ass}{\mathbb{A}}
\newcommand{\Orp}{\mathbb{O}}
\DeclareMathOperator{\AV}{\textrm{AV}}
\DeclareMathOperator{\BV}{\textrm{BV}}
\DeclareMathOperator{\NBPB}{\textrm{PB}}
\DeclareMathOperator{\CC}{\textrm{CC}}
\DeclareMathOperator{\Con}{\textrm{C}}
\DeclareMathOperator{\ST}{\textrm{ST}}
\newcommand{\rd}[1]{\protect\overrightarrow{#1}}
\newcommand{\ld}[1]{\protect\overleftarrow{#1}}
\newcommand{\ar}[1]{#1'}
\newcommand{\MAX}{\mathrm{MAX}}
\newcommand{\DNF}{\mathrm{DNF}}
\newcommand{\CUT}{\mathrm{CUT}}
\title{Orienting Ordered Scaffolds: Complexity and Algorithms}
\author{
Sergey~Aganezov\textsuperscript{1,}\footnote{These authors contributed equally},
Pavel~Avdeyev\textsuperscript{2,}\footnotemark[1],
Nikita~Alexeev\textsuperscript{3},\\
Yongwu~Rong\textsuperscript{4},
and~
Max A. Alekseyev\textsuperscript{2,5,}\footnote{Corresponding author: maxal@gwu.edu} 
\bigskip
\\
\footnotesize{\textsuperscript{1}  Department of Computer Science, John Hopkins University,}\\\footnotesize{Baltimore, MD, USA} \\
\footnotesize{\textsuperscript{2}  Department of Mathematics, The George Washington University,}\\\footnotesize{Washington, DC, USA} \\
\footnotesize{\textsuperscript{3}  International Laboratory ``Computer technologies'',}\\\footnotesize{ITMO University, Saint Petersburg, Russia} \\
\footnotesize{\textsuperscript{4}  Department of Mathematics, CUNY Queens College, Queens, NY, USA} \\
\footnotesize{\textsuperscript{5}  Department of Biostatistics and Bioinformatics,}\\\footnotesize{The George Washington University, Washington, DC, USA}
\\
}
\date{}
\begin{document}

\maketitle



 \begin{abstract}
Despite the recent progress in genome sequencing and assembly, many of the currently available assembled genomes come in a draft form. Such draft genomes consist of a large number of genomic fragments (\emph{scaffolds}), whose order and/or orientation (i.e., strand) in the genome are unknown. There exist various scaffold assembly methods, which attempt to determine the order and orientation of scaffolds along the genome chromosomes. Some of these methods (e.g., based on FISH physical mapping, chromatin conformation capture, etc.) can infer the order of scaffolds, but not necessarily their orientation. This leads to a special case of the \emph{scaffold orientation problem}  (i.e., deducing the orientation of each scaffold) with a known order of the scaffolds. 
     
We address the problem of orientating ordered scaffolds as an optimization problem based on given weighted orientations of scaffolds and their pairs (e.g., coming from pair-end sequencing reads, long reads, or homologous relations). We formalize this problem using notion of a scaffold graph (i.e., a graph, where vertices correspond to the assembled contigs or scaffolds and edges represent connections between them). We prove that this problem is $\NP$-hard, and present a polynomial-time algorithm for solving its special case, where orientation of each scaffold is imposed relatively to at most two other scaffolds. We further develop an FPT algorithm for the general case of the OOS problem. 
\end{abstract}

\section{Introduction}
While genome sequencing technologies are constantly evolving, they are still unable to read at once complete genomic sequences from organisms of interest. Instead, they produce a large number of rather short genomic fragments, called \emph{reads}, originating from unknown locations and strands of the genome. The problem then becomes to assemble the reads into the complete genome. Existing genome assemblers usually assemble reads based on their overlap patterns and produce longer genomic fragments, called \emph{contigs}, which are typically interweaved with highly polymorphic and/or repetitive regions in the genome. Contigs are further assembled into \emph{scaffolds}, i.e., sequences of contigs interspaced with gaps.\footnote{We remark that contigs can be viewed as a special type of scaffolds with no gaps.} Assembling scaffolds into larger scaffolds (ideally representing complete chromosomes) is called the \emph{scaffold assembly problem}.

The scaffold assembly problem is known to be $\NP$-hard~\cite{Zimin2008,Kececioglu1995a,Gao2011,Pop2004a,Chen2017}, but there still exists a number of methods that use heuristic and/or exact algorithmic approaches to address it. 
The \emph{scaffold assembly problem} consists of two subproblems:
\begin{enumerate}
\item determine the order of scaffolds (\emph{scaffold order problem}); and
\item determine the orientation (i.e., strand of origin) of scaffolds (\emph{scaffold orientation problem}). 
\end{enumerate}

Some methods attempt to solve these subproblems jointly by using various types of additional data including jumping libraries \cite{Hunt2014a,Simpson2009,Koren2011,Gritsenko2012,LuoR.LiuB.XieY.LiZ.HuangW.YuanJ.andWang2012,Dayarian2010a,Boetzer2011}, long error-prone reads \cite{Warren2015,Bankevich2012a,Bashir2012,boetzer2014sspace,Lam2015}, homology relationships between genomes \cite{Assour2015,Aganezov2016,Avdeyev2016,Anselmetti2015,Kolmogorov2016}, etc. 
Other methods (typically based on wet-lab experiments \cite{Burton2013,Tang2015a,Nagarajan2008,Jiao2017,Putnam2016,Reyes-Chin-Wo2017}) can often reliably reconstruct the order of scaffolds, but may fail to impose their orientation.

The scaffold orientation problem is also known to be $\NP$-hard \cite{Bodily2015a,Kececioglu1995a}. 
Since the scaffold order problem can often be reliably solved with wet-lab based methods, this inspires us to consider the special case of the scaffold orientation problem with the given order of scaffolds, which we refer to as the \emph{orientation of ordered scaffolds} (OOS) problem.
We formulate the OOS as an optimization problem based on given weighted orientations of scaffolds and their pairs (e.g., coming from pair-end sequencing reads, long reads, or homologous relations). We prove that the OOS is $\NP$-hard both in the case of linear genomes and in the case of circular genomes. We present a polynomial-time algorithm for solving the special case of the OOS, where the orientation of each scaffold is imposed relatively to at most two other scaffolds, and further generalize it to an $\FPT$ algorithm for the general OOS problem. The proposed algorithms are implemented in the \CAMSA~\cite{Aganezov2017} software that have been developed for comparative analysis and merging of scaffold assemblies.

\section{Backgrounds}
We start with a brief description of the notation which have been used in \CAMSA framework and provides a unifying way to represent scaffold assemblies obtained by different methods. 

Let $\Scafs = \{s_i\}_{i=1}^n$ be the set of scaffolds. We represent an {\em assembly} of scaffolds from a set $\Scafs$ as a \emph{set} of \emph{assembly points}. Each assembly point is formed by an adjacency between two scaffolds from $\Scafs$. Namely, an assembly point $p = (s_i, s_j)$ tells that the scaffolds $s_i$ and $s_j$ are adjacent in the assembly. Additionally, we may know the orientation of either or both of the scaffolds and thus distinguish between three types of assembly points: 
\begin{enumerate} 
    \item $p$ is \emph{oriented} if the orientation of both scaffolds $s_i$ and $s_j$ is known;
    \item $p$ is \emph{semi-oriented} if the orientation of only one scaffold among $s_i$ and $s_j$ is known;
    \item $p$ is \emph{unoriented} if the orientation of neither of $s_i$ and $s_j$ is known.
\end{enumerate} 
We denote the known orientation of scaffolds in assembly points by overhead arrows, where the right arrow corresponds to the original genomic sequence representing a scaffold, while the left arrow corresponds to the reverse complement of this sequence. For example, $(\rd{s_i}, \ld{s_j})$, $(\rd{s_i}, s_j)$, and  $(s_i, s_j)$ are oriented, semi-oriented, and unoriented assembly points, respectively. We remark that assembly points $(\rd{s_i}, \rd{s_j})$ and $(\ld{s_j}, \ld{s_i})$ represent the same adjacency between oriented scaffolds; to make this representation unique we will require that in all assembly points $(s_i, s_j)$ we have $i<j$. Another way to represent the orientation of the scaffolds in an assembly point is by using superscripts $h$ and $t$ denoting the head and tail extremities of the scaffold's genomic sequence, e.g., $(\rd{s_i}, \rd{s_j})$ can also be written as $(s_i^h, s_j^t)$.

We will need an auxiliary function $\sn(p,i)$ defined on an assembly point $p$ and an index $i\in\{1, 2\}$ that returns the scaffold corresponding to the component $i$ of $p$ (e.g., $\sn((\rd{s_i}, \rd{s_j}), 2) = s_j$). We define a \emph{realization} of an assembly point $p$ as any oriented assembly point that can be obtained from $p$ by orienting the unoriented scaffolds. We denote the set of realizations of $p$ by $\rlz(p)$. If $p$ is oriented, than it has a single realization equal $p$ itself (i.e., $\rlz(p)=\{p\}$); if $p$ is semi-oriented, then it has two realizations (i.e., $|\rlz(p)| = 2$); and if $p$ is unoriented, then it has four realizations (i.e., $|\rlz(p)| = 4$). For example, 
\begin{equation}\label{eq:Rsjsj}
\rlz((s_i, s_j)) = \left\{(\rd{s_i}, \rd{s_j}), (\rd{s_i}, \ld{s_j}), (\ld{s_i}, \rd{s_j}), (\ld{s_i}, \ld{s_j})\right\}.
\end{equation}

An assembly point $p$ is called a \emph{refinement} of an assembly point $q$ if $\rlz(p)\subset\rlz(q)$. From now on, we assume that no assembly point in a given assembly is a refinement of another assembly point (otherwise we simply discard the latter assembly point as less informative). We further assume that in a given assembly there are no assembly points $(\rd{s_i}, s_j)$ and $(s_i, \rd{s_j})$ such that either $s_i$ or $s_j$ belongs to an assembly point differing from $(\rd{s_i}, s_j)$ and $(s_i, \rd{s_j})$  (otherwise we simply replace $(\rd{s_i}, s_j)$ and $(s_i, \rd{s_j})$ on $(\rd{s_i}, \rd{s_j})$\footnote{It will be seen later that the any assembly realization in this case is conflicting.}). Similarly, we assume that there are no assembly points $s_i$, $s_j$ that forms $(\rd{s_i}, \ld{s_j}), (\ld{s_i}, \rd{s_j}), (\ld{s_i}, \ld{s_j})$. We refer to an assembly with no assembly point refinements and no pairs described above as a \emph{proper assembly}.

For a given assembly $\Ass$ we will use subscripts \emph{u}/\emph{s}/\emph{o} to denote the sets of unoriented/semi-oriented/oriented assembly points in $\Ass$ (e.g., $\Ass_u\subset\Ass$ is the set of all unoriented assembly points from $\Ass$). We also denote by $\Scafs(\Ass)$ the set of scaffolds appearing in the assembly points from $\Ass$.

We call two assembly points \emph{overlapping} if they involve the same scaffold, and further call them \emph{conflicting} if they involve the same extremity of this scaffold. We generalize this notion for semi-oriented and unoriented assembly points: two assembly points $p$ and $q$ are \emph{conflicting} if all pairs of their realizations $\{\ar{p}, \ar{q}\}\in\rlz(p)\times\rlz(p)$ are conflicting. If some, but not all, pairs of the realizations are conflicting, $p$ and $q$ are called \emph{semi-conflicting}. Otherwise, $p$ and $q$ are called \emph{non-conflicting}.

We extend the notion of non-/semi- conflictness to entire assemblies as follows. A scaffold assembly $\Ass$ is \emph{non-conflicting} if all pairs of assembly points in it are non-conflicting, and $\Ass$ is \emph{semi-conflicting} if all pairs of assembly points are non-conflicting or semi-conflicting with at least one pair being semi-conflicting. 

\section{Methods} 
\subsection{Assembly Realizations}

For an assembly $\Ass = \{p_i\}_{i=1}^k$, an assembly $\ar{\Ass} = \{q_i\}_{i=1}^k$ is called a \emph{realization}\footnote{It can be easily seen that a realization of $\Ass$ may exist only if $\Ass$ is proper.} of $\Ass$ if there exists a permutation $\pi$ of order $k$ such that $q_{\pi_i}\in \rlz(p_i)$ for all $i=1,2,\dots,k$. We denote by $\rlz(\Ass)$ the set of realizations of assembly $\Ass$, and by $\ncrlz(\Ass)$ the set of non-conflicting realizations among them. 

We define the \emph{scaffold assembly graph} $\SAG(\Ass)$ on the set of vertices $\{s^h, s^t\ :\ s\in\Scafs(\Ass)\}$ and edges of two types: directed edges $(s^t, s^h)$ that encode scaffolds from $\Scafs(\Ass)$, and undirected edges that encode all possible realizations of all assembly points in $\Ass$ (Fig.~\ref{fig:sag_og_cog}a). We further define the \emph{order (multi)graph} $\OG(\Ass)$ formed by the set of vertices $\Scafs(\Ass)$ and the set of undirected edges $\{\{\sn(p,1), \sn(p,2)\}\ :\ p\in\Ass\}$ (Fig.~\ref{fig:sag_og_cog}b). The order graph can also be obtained from $\SAG(\Ass)$ by first contracting the directed edges, and then by substituting all edges that encode realizations of the same assembly point with a single edge (Fig.~\ref{fig:sag_og_cog}b). We define the \emph{contracted order graph} $\COG(\Ass)$ obtained from $\OG(\Ass)$ by replacing all multi-edges edges with single edges (Fig.~\ref{fig:sag_og_cog}c).

Let $\deg(v)$ be the degree of a vertex $v$, i.e., the number of edges (counted with multiplicity) incident to $v$ in $\OG(\Ass)$. We call the order graph $\OG(\Ass)$ \emph{non-branching} if $\deg(v)\leq 2$ for all vertices $v$ of $\OG(\Ass)$. 

\begin{figure}
    \includegraphics[width=\textwidth]{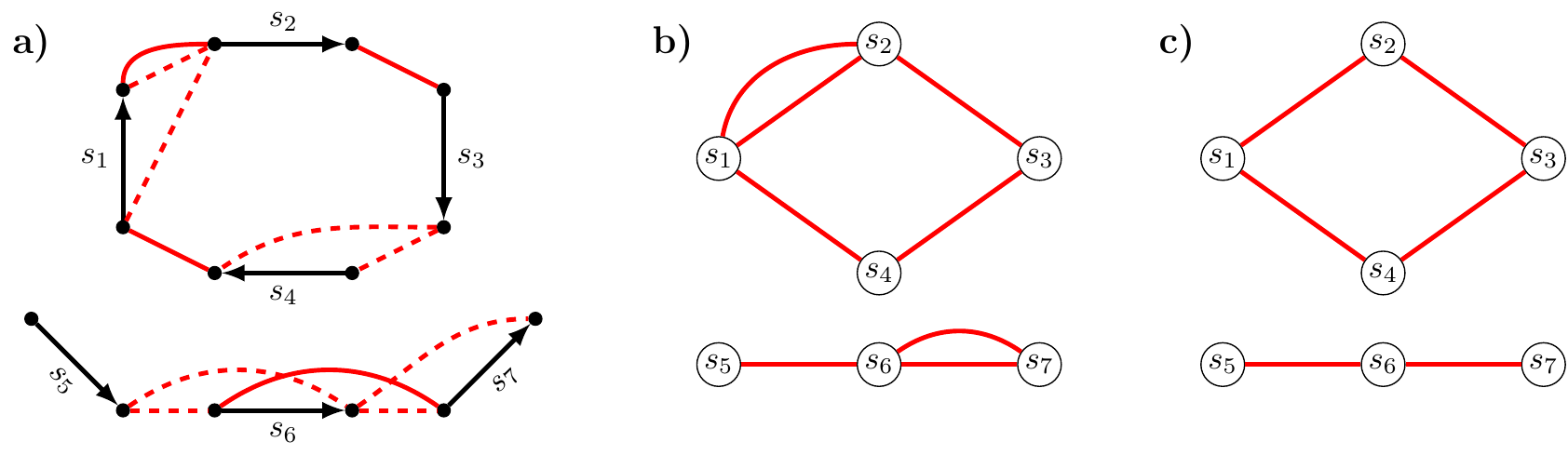}
    \caption{For an assembly $A = \{(s_1, \rd{s_2}), (\rd{s_1}, \rd{s_2}), (\rd{s_2}, \rd{s_3}), (\rd{s_3}, s_4), (\ld{s_1}, \ld{s_4}), (\rd{s_5}, s_6)$, $(\ld{s_6}, \rd{s_7}), (\rd{s_6}, s_7)\}$, \textbf{a)} the scaffold assembly graph $\SAG(A)$, where semi-oriented assembly points, oriented assembly points, and scaffolds are represented by dashed red edges, solid red edges, and directed black edges, respectively. \textbf{b)} The order graph $\OG(A)$. \textbf{c)} The contracted order graph $\COG(A)$.}
    \label{fig:sag_og_cog}
\end{figure}

\begin{lem}
\label{lem:non-branching}
For a non-conflicting realization $\ar{\Ass}$ of an assembly $\Ass$, $\OG(\ar{\Ass})$ is non-branching. 
\end{lem}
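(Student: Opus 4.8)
The plan is to prove the degree bound vertex-by-vertex via a pigeonhole argument on the two extremities of each scaffold. First I would record that, since $\ar{\Ass}$ is a realization of $\Ass$, every one of its assembly points is oriented; consequently each point $q\in\ar{\Ass}$ has a single realization equal to itself and singles out a specific extremity (head or tail) of each of its two scaffolds $\sn(q,1)$ and $\sn(q,2)$. In particular, for oriented assembly points the general notion of conflict specializes to the concrete condition stated in the background: two points of $\ar{\Ass}$ are conflicting precisely when they share an extremity of some scaffold.

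Next I would fix a scaffold $s\in\Scafs(\ar{\Ass})$ and identify its degree in $\OG(\ar{\Ass})$. By the definition of the order graph, each assembly point $p\in\ar{\Ass}$ contributes exactly one edge $\{\sn(p,1),\sn(p,2)\}$, and since the convention $i<j$ rules out $\sn(p,1)=\sn(p,2)$ there are no self-loops. Hence, counting edges with multiplicity as in the definition of $\deg$, the value $\deg(s)$ equals the number of assembly points of $\ar{\Ass}$ that involve $s$.

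The core step is then to bound this count. Every oriented assembly point incident to $s$ occupies exactly one of the two extremities $s^h,s^t$ (for instance, $(\rd{s}, \ld{s_j})$ occupies $s^h$, while $(s_j,\rd{s})$ occupies $s^t$). If two distinct points of $\ar{\Ass}$ occupied the same extremity of $s$, they would share that extremity and would therefore be conflicting, contradicting the hypothesis that $\ar{\Ass}$ is non-conflicting. Thus the assembly points involving $s$ occupy pairwise distinct extremities; since $s$ has only two extremities, at most two such points exist, giving $\deg(s)\le 2$. As the argument is uniform in $s$, the bound holds at every vertex and $\OG(\ar{\Ass})$ is non-branching.

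I expect the only delicate point to be the bookkeeping in the second step: one must keep $\deg(s)$ defined with multiplicity so that two parallel assembly points between $s$ and a common partner (say one occupying $s^h$ and one occupying $s^t$) correctly contribute $2$ to $\deg(s)$ while still using distinct extremities — otherwise a naive count of \emph{neighbouring scaffolds} would undercount and obscure exactly the configuration that saturates the bound. Once the degree is read off as the number of incident assembly points and each such point is charged to a distinct extremity, the conclusion is immediate.
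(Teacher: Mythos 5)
Your proof is correct and takes essentially the same approach as the paper's: since a scaffold has only two extremities and, in a non-conflicting oriented assembly, each extremity can be occupied by at most one assembly point, every vertex of $\OG(\ar{\Ass})$ has degree at most $2$. The paper compresses this into a single sentence; your version simply makes the extremity pigeonhole argument and the with-multiplicity degree bookkeeping explicit.
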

\begin{proof} 
Each vertex $v$ in $\OG(\ar{\Ass})$ represents a scaffold, which has two extremities and thus can participate in at most two non-conflicting assembly points in $\ar{\Ass}$. Hence, $\deg(v)\leq 2$.
\end{proof}

We notice that any non-conflicting realization $\ar{\Ass}$ of an assembly $\Ass$ provides orientation for all scaffolds involved in each connected component of $\SAG(\ar{\Ass})$ (as well as of $\OG(\ar{\Ass})$ and $\COG(\ar{\Ass})$) relatively to each other. 

\begin{thm}\label{thm:NR1}
An assembly $\Ass$ has at least one non-conflicting realization (i.e., $|\ncrlz(\Ass)|\ge 1$) if and only if $\Ass$ is non-conflicting or semi-conflicting and $\OG(\Ass)$ is non-branching.
\end{thm}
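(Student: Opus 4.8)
The plan is to prove the two directions of the equivalence separately, working throughout with the following reformulation. Choosing a realization $\ar{\Ass}\in\rlz(\Ass)$ amounts to selecting, for every assembly point $p\in\Ass$ and every scaffold $s$ incident to $p$, one of the two extremities $s^h,s^t$ for $p$ to use (subject to any orientation already fixed in $p$), and $\ar{\Ass}$ is non-conflicting exactly when the selected extremities form a matching, i.e.\ no extremity is claimed by two distinct points. The structural fact I would lean on is that, by the definition of $\rlz(p)$, the extremity chosen at one endpoint of $p$ is independent of the extremity chosen at its other endpoint, and both are independent of the choices made at the remaining assembly points.

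For the forward direction ($\Rightarrow$) I would fix some $\ar{\Ass}\in\ncrlz(\Ass)$. Since the underlying pairs $\{\sn(p,1),\sn(p,2)\}$ are unaffected by realization, $\OG(\ar{\Ass})=\OG(\Ass)$ as multigraphs, so $\OG(\Ass)$ is non-branching by Lemma~\ref{lem:non-branching}. To rule out a conflicting pair, I would argue by contradiction: if $p,q\in\Ass$ were conflicting then, by definition, every pair in $\rlz(p)\times\rlz(q)$ conflicts, and in particular the realized versions of $p$ and $q$ appearing in $\ar{\Ass}$ conflict, contradicting that $\ar{\Ass}$ is non-conflicting. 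Hence no pair of $\Ass$ is conflicting, which is exactly the statement that $\Ass$ is non-conflicting or semi-conflicting.

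For the reverse direction ($\Leftarrow$) the plan is to construct a non-conflicting realization directly. Because $\OG(\Ass)$ is non-branching, every scaffold lies in at most two assembly points, so the matching condition above decomposes into \emph{local} constraints, one per scaffold $s$: whenever $s$ lies in two points $p,q$, those points must claim the two distinct extremities $s^h,s^t$. The crucial simplification is that, by the independence noted in the first paragraph, the variable ``extremity claimed by $p$ at $s$'' occurs in exactly one such constraint, so the constraints are pairwise variable-disjoint and may be satisfied independently. A single constraint at $s$ fails to be satisfiable only when both $p$ and $q$ already fix the extremity of $s$ (i.e.\ $s$ is oriented in both) and to the \emph{same} extremity; but then $p$ and $q$ share that extremity in every pair of their realizations, making them a conflicting pair. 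Since $\Ass$ is non-conflicting or semi-conflicting it has no conflicting pair, so each local constraint is satisfiable, and selecting a satisfying extremity at each scaffold yields the desired $\ar{\Ass}\in\ncrlz(\Ass)$.

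The main obstacle is the reverse direction, and specifically the step where the two hypotheses actually enter. The non-branching hypothesis is what guarantees that each scaffold contributes a binary constraint (two points to be placed on two available extremities) rather than an infeasible one, while the absence of conflicting pairs is precisely what prevents any single binary constraint from being forced to fail. I would be most careful about the multi-edge case, in which $p$ and $q$ share both of their scaffolds: here one must check that the two resulting constraints (one at each shared scaffold) still involve disjoint variables, so that joint satisfiability follows from individual satisfiability, and that a forced failure of either constraint still certifies that $p$ and $q$ are conflicting.
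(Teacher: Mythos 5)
Your proposal is correct and follows essentially the same route as the paper's proof: the forward direction argues that a conflicting pair would conflict in every realization and transfers non-branching from $\OG(\ar{\Ass})$ to $\OG(\Ass)$ via Lemma~\ref{lem:non-branching}, and the reverse direction is the same scaffold-by-scaffold orientation argument, which you merely rephrase as satisfying variable-disjoint local extremity constraints (one per degree-2 vertex of $\OG(\Ass)$), with forced failure of a constraint certifying a conflicting pair. Your explicit treatment of the multi-edge case and of the independence of the two ends of an assembly point makes precise what the paper's proof leaves implicit, but it is the same argument.
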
    
\begin{proof}
Suppose that $|\ncrlz(\Ass)|\ge 1$ and pick any $\ar{\Ass}\in \ncrlz(\Ass)$. Then for every pair of assembly points $p,q\in \Ass$, their realizations in $\ar{\Ass}$ are non-conflicting, implying that $p$ and $q$ are either non-conflicting or semi-conflicting. Hence, $\Ass$ is non-conflicting or semi-conflicting. Since $\Ass$ is a proper assembly, we have $\OG(\Ass)=\OG(\Ass')$. Taking into the account that $\Ass'$ is non-conflicting, Lemma~\ref{lem:non-branching} implies that $\OG(\Ass)$ is non-branching.
         
Vice versa, suppose that $\Ass$ is non-conflicting or semi-conflicting and $\OG(\Ass)$ is non-branching. To prove that $|\ncrlz(\Ass)|\ge 1$, we will orient unoriented scaffolds in all assembly points in $\Ass$ without creating conflicts. Every scaffold $s$ corresponds to a vertex $v$ in $\OG(\Ass)$ of degree at most $2$. If $\deg(v)=1$, then $s$ participates in one assembly point $p$, and $s$ is either already oriented in $p$ or we pick an arbitrary orientation for it. If $\deg(v)=2$, then $s$ participates in two overlapping assembly points $p$ and $q$. If $s$ is not oriented in either of $p$, $q$, we pick an arbitrary orientation for it consistently  across $p$ and $q$ (i.e., keeping them non-conflicting). If $s$ is oriented in exactly one assembly point, we orient the unoriented instance of $s$ consistently with its orientation in the other assembly point. 
Since conflicts may appear only between assembly points that share a vertex in $\OG(\Ass)$, the constructed orientations produce no new conflicts. On other hand, the scaffolds that are already oriented in $\Ass$ impose no conflicts since $\Ass$ is non-conflicting or semi-conflicting. Hence, the resulting oriented assembly points form a non-conflicting assembly from $\ncrlz(\Ass)$, i.e., $|\ncrlz(\Ass)|\ge 1$.
\end{proof}

We remark that if $\OG(\Ass)$ is branching, the assembly $\Ass$ may be semi-conflicting but have $|\ncrlz(\Ass)|=0$. An example is given by $\Ass=\{(s_1,s_{i+1})\}_{i=1}^k$ with $k>2$, which contains no conflicting assembly points (in fact, all assembly points in $\Ass$ are semi-conflicting), but $|\ncrlz(\Ass)|=0$. 

From now on, we will always assume that assembly $\Ass$ has at least one non-conflicting realization (i.e., $|\ncrlz(\Ass)|\ge 1$). For an assembly $\Ass$, the orientation of some scaffolds from $\Scafs(\Ass)$ does not depend on the choice of a realization from $\ncrlz(\Ass)$ (we denote the set of such scaffolds by $\Scafs_o(\Ass)$), while the orientation of other scaffolds within some assembly points varies across realizations from $\ncrlz(\Ass)$ (we denote the set of such scaffolds by $\Scafs_u(\Ass)$). Trivially, we have $\Scafs_u(\Ass)\cup \Scafs_o(\Ass)=\Scafs(\Ass)$. It can be easily seen that the set $\Scafs_u(\Ass)$ is formed by the scaffolds for which the orientation in the proof of Theorem~\ref{thm:NR1} was chosen arbitrarily, implying the following statement.
    
\begin{cor}
For a given assembly $\Ass$ with $|\ncrlz(\Ass)|\ge 1$, we have $|\ncrlz(\Ass)|=2^{|\Scafs_u(\Ass)|}$.
\end{cor}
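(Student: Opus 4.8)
The plan is to exhibit an explicit bijection between $\ncrlz(\Ass)$ and the set of independent binary orientation-choices that arise in the construction from the proof of Theorem~\ref{thm:NR1}, and then to show that the number of genuinely free choices is exactly $|\Scafs_u(\Ass)|$. First I would record the structural fact underlying that construction: since $\OG(\Ass)$ is non-branching, every scaffold $s$ corresponds to a vertex of degree at most $2$ and hence participates in at most two assembly points. Consequently a non-conflicting realization is the same datum as an assignment that records, for each scaffold $s$ and each assembly point incident to it, which extremity ($s^h$ or $s^t$) is exposed, subject to (i) respecting the extremities already fixed in the oriented and semi-oriented points of $\Ass$, and (ii) using the two distinct extremities of every degree-$2$ scaffold for its two distinct points (the non-conflict condition). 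The realization of each point $p=\{s_i,s_j\}$ is then read off from the chosen extremities of $s_i$ and of $s_j$ at $p$, so the map sending a realization to its tuple of per-scaffold extremity-choices is injective.

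Next I would classify each scaffold according to how many options its extremity-choice admits. If $s$ is oriented in at least one of its points, that point fixes the exposed extremity of $s$; when the vertex of $s$ has degree $2$ the non-conflict condition then forces the complementary extremity at the other point, so the choice for $s$ is uniquely determined and $s$ contributes a single option. If instead $s$ is unoriented in all of its points, then swapping which extremity faces each incident point yields a second admissible choice, because this swap alters only the realizations of the (at most two) points incident to $s$, preserves the non-conflict condition at $s$ itself (a pair of distinct extremities stays distinct after swapping), and touches no other scaffold. I would then identify these two classes with $\Scafs_o(\Ass)$ and $\Scafs_u(\Ass)$: a scaffold whose extremities are forced keeps a fixed orientation across all realizations, while a scaffold admitting both options has its orientation genuinely vary, exactly matching the defining property of $\Scafs_u(\Ass)$; this is the sense in which $\Scafs_u(\Ass)$ is the set of scaffolds oriented arbitrarily in the proof of Theorem~\ref{thm:NR1}.

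The crucial point, and the step I expect to require the most care, is independence: I must verify that the binary choices attached to distinct scaffolds of $\Scafs_u(\Ass)$ can be made simultaneously and independently, so that the realizations are counted by the product rule rather than being coupled through shared assembly points. This is exactly the locality already observed — flipping one free scaffold changes only the realizations of the points incident to it and never forces a change at a neighboring scaffold — so I would package it as the statement that the extremity-choice map is a \emph{bijection} onto the product $\prod_{s\in\Scafs(\Ass)}\{\text{admissible options for }s\}$. Injectivity was noted above; surjectivity is precisely the admissibility-and-independence just argued, and it uses the standing hypothesis $|\ncrlz(\Ass)|\ge 1$ to guarantee that the (unique) forced assignments on $\Scafs_o(\Ass)$ are mutually non-conflicting in the first place.

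Assembling these pieces yields
$|\ncrlz(\Ass)| = \prod_{s\in\Scafs(\Ass)}(\text{number of options for }s) = \prod_{s\in\Scafs_u(\Ass)} 2 = 2^{|\Scafs_u(\Ass)|}$,
as claimed.
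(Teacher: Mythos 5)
Your proposal is correct and takes essentially the same route as the paper: the paper obtains the corollary directly from the construction in the proof of Theorem~\ref{thm:NR1}, noting that $\Scafs_u(\Ass)$ is exactly the set of scaffolds whose orientation is chosen arbitrarily there, with each such scaffold contributing an independent binary choice. Your write-up simply makes explicit---via the per-scaffold extremity-choice bijection and the locality/independence argument---what the paper dismisses as ``easily seen.''
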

        
We label scaffolds from $\Scafs(\Ass)$ with integers $\{1,\dots,|\Scafs(\Ass)|\}$. From computational perspective, we assume that we can get a scaffold from its name and vice verca in $\bigO{1}$ time.

\begin{lem}
\label{lem:non_conflict_time}
Testing whether a given assembly $\Ass$ has a non-conflicting realization can be done in $\bigO{k}$ time, where $k=|\Scafs(\Ass)|$.
\end{lem}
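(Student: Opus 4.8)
The plan is to reduce the test to the two conditions furnished by Theorem~\ref{thm:NR1}: an assembly $\Ass$ admits a non-conflicting realization if and only if (i)~$\OG(\Ass)$ is non-branching and (ii)~no pair of assembly points in $\Ass$ is conflicting (equivalently, $\Ass$ is non-conflicting or semi-conflicting). I would verify these two conditions directly while streaming over the assembly points, exploiting the integer labeling of $\Scafs(\Ass)$ to index auxiliary arrays of size $k$ so that every scaffold lookup and counter update costs $\bigO{1}$.

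First I would bound the amount of work. Each assembly point $(s_i,s_j)$ with $i<j$ contributes one edge to $\OG(\Ass)$ between two distinct vertices and increments the degree of its two endpoints. By the handshake identity, a multigraph on $k$ vertices in which every degree is at most $2$ has at most $k$ edges; hence if $|\Ass|>k$ then some vertex must have degree $\geq 3$ and condition (i) fails. Reading assembly points one at a time and maintaining a degree counter per scaffold, I can therefore abort with answer ``no'' as soon as a counter reaches $3$. This happens within the first $k+1$ points read, since $k$ counters whose values sum to $2(k+1)>2k$ cannot all be at most $2$. Consequently at most $k+1$ points are ever examined, so the non-branching test runs in $\bigO{k}$ time and, whenever it does not abort, also guarantees $|\Ass|\le k$.

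Next I would verify condition (ii). Two assembly points can conflict only if they overlap, i.e.\ share a scaffold; once (i) holds, every scaffold lies in at most two points, so the overlapping pairs are exactly the (at most one) pair recorded at each degree-$2$ scaffold, yielding $\bigO{k}$ candidate pairs overall. For each such pair $p,q$ I would test conflictness straight from the definition by scanning $\rlz(p)\times\rlz(q)$: since $|\rlz(p)|,|\rlz(q)|\le 4$, at most $16$ realization pairs are inspected, and each is checked in $\bigO{1}$ by comparing which scaffold extremities the two realizations use. The assembly passes (ii) exactly when no overlapping pair is conflicting, so this stage also costs $\bigO{k}$, and combining the two stages gives the claimed $\bigO{k}$ bound.

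I expect the main subtlety to be the input-size argument rather than either test in isolation: a priori the assembly could list arbitrarily many points, so the crux is the handshake observation that lets me stop after $\bigO{k}$ points while still correctly distinguishing branching from non-branching. The remaining work is bookkeeping — initializing the size-$k$ arrays, storing at each scaffold the (at most two) incident points so that overlapping pairs are enumerated without rescanning, and ensuring the $\bigO{1}$ extremity comparisons faithfully implement the conflict definition for oriented, semi-oriented, and unoriented points alike.
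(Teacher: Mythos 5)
Your proposal is correct and follows essentially the same route as the paper's proof: both reduce the test to the two conditions of Theorem~\ref{thm:NR1}, build a scaffold-indexed structure mapping each scaffold to its incident assembly points, abort as soon as some scaffold accumulates three points, and then check the (at most one) overlapping pair at each degree-$2$ scaffold for conflictness in $\bigO{1}$ time. The only differences are cosmetic improvements on your side — arrays indexed by the integer labels instead of a hash table, and an explicit handshake argument showing the abort occurs within the first $k+1$ points, a detail the paper leaves implicit.
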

\begin{proof} To test whether $\Ass$ has a non-conflicting realization, we first create a hash table indexed by $\Scafs(\Ass)$ that for every scaffold $s\in \Scafs(\Ass)$ will contain a list of assembly points that involve $s$. We iterate over all assembly points $p\in\Ass$ and add $p$ to two lists in the hash table indexed by the scaffolds participating in $p$. If the length of some list becomes greater than 2, then $\Ass$ is conflicting and we stop. If we successfully complete the iterations, then every scaffold from $\Scafs(\Ass)$ participates in at most two assembly points in $\Ass$, and thus we made $\bigO{k}$ steps of $\bigO{1}$ time each. 
    
Next, for every scaffold whose list in the hash table has length 2, we check whether the corresponding assembly points are either non-conflicting or semi-conflicting. If not, then $\Ass$ is conflicting and we stop. If the check completes successfully, then $\Ass$ has a non-conflicting realization by Theorem~\ref{thm:NR1}. The check takes $\bigO{k}$ steps of $\bigO{1}$ time each, and thus the total running time comes to $\bigO{k}$. 
\end{proof}

A pseudocode for the test described in the proof of Lemma~\ref{lem:non_conflict_time} is given Algorithm~\ref{algo:checkNCR} in the Appendix.

\begin{lem}\label{lem:compSu}
For a given assembly $\Ass$ with $|\ncrlz(\Ass)|\geq 1$, the set $\Scafs_u(\Ass)$ can be computed in $\bigO{k}$ time, where $k=|\Scafs(\Ass)|$.
\end{lem}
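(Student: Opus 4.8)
The plan is to reduce the computation of $\Scafs_u(\Ass)$ to a single linear scan over the assembly points, relying on the combinatorial characterization of $\Scafs_u(\Ass)$ already extracted from the proof of Theorem~\ref{thm:NR1}. Recall that a scaffold lies in $\Scafs_u(\Ass)$ precisely when its orientation was chosen arbitrarily in the construction of a non-conflicting realization in that proof. Inspecting the construction, a scaffold is oriented arbitrarily exactly when it is unoriented in every assembly point containing it (whether it has degree $1$ or degree $2$ in $\OG(\Ass)$), whereas a scaffold that carries a fixed orientation in at least one assembly point has its orientation forced—hence constant across all realizations—in all of its assembly points by the non-conflicting conditions at its shared extremities. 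Thus
\[
\Scafs_u(\Ass)=\{\,s\in\Scafs(\Ass)\ :\ s\text{ is unoriented in every assembly point containing }s\,\},
\]
and equivalently $\Scafs_o(\Ass)$ consists of the scaffolds that appear with a fixed orientation in at least one assembly point.

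First I would allocate a Boolean array indexed by $\Scafs(\Ass)$, initialized to \emph{false}, recording for each scaffold whether it has yet been seen carrying a fixed orientation. Then I would iterate once over all assembly points $p\in\Ass$: for each of the two components $\sn(p,1)$ and $\sn(p,2)$ I test in $\bigO{1}$ time whether that component carries an orientation (an overhead arrow, equivalently a head/tail superscript), and if so I set the corresponding array entry to \emph{true}. After the scan, $\Scafs_u(\Ass)$ is read off as the set of scaffolds whose entry is still \emph{false}, in one more linear pass. Alternatively, one may reuse the per-scaffold adjacency lists built in the proof of Lemma~\ref{lem:non_conflict_time} and, for each scaffold, test its at most two incident assembly points for a fixed orientation.

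For the running time, note that since $|\ncrlz(\Ass)|\ge 1$, Theorem~\ref{thm:NR1} guarantees that $\OG(\Ass)$ is non-branching, so every vertex has degree at most $2$; summing degrees gives $2|\Ass|=\sum_v\deg(v)\le 2k$, whence $|\Ass|\le k$. The initialization and the final read-off each cost $\bigO{k}$, and the scan processes $|\Ass|\le k$ assembly points at $\bigO{1}$ cost apiece, so the total time is $\bigO{k}$.

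The main obstacle is conceptual rather than computational: one must be confident that membership in $\Scafs_u(\Ass)$ is a purely local property of each scaffold—namely, whether it ever appears oriented—and does not require propagating a fixed orientation across an entire connected component of $\OG(\Ass)$. This is exactly what the characterization drawn from Theorem~\ref{thm:NR1} supplies, since an unoriented assembly point constrains only the relative use of extremities at its shared scaffolds and never pins down the absolute orientation of a scaffold that is unoriented throughout; in particular, a scaffold adjacent (through unoriented points) to an oriented one may still be flipped freely. Once this locality is in hand, the linear scan together with the bound $|\Ass|\le k$ makes the $\bigO{k}$ claim routine.
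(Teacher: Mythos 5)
Your proposal is correct and follows essentially the same approach as the paper: both rely on the characterization that $\Scafs_u(\Ass)$ consists exactly of the scaffolds unoriented in every assembly point containing them, and compute it by a single linear pass over a per-scaffold structure (your Boolean-array scan over assembly points is trivially equivalent to the paper's scan over scaffolds using the hash table from Lemma~\ref{lem:non_conflict_time}, which you even mention as an alternative). Your added justification of the locality of this characterization and the explicit bound $|\Ass|\le k$ via non-branchingness only make explicit what the paper leaves implicit.
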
    
\begin{proof}
We will construct the set $S = \Scafs_u(\Ass)$ iteratively.
Initially we let $S=\emptyset$. Following the algorithm described in the proof for Lemma~\ref{lem:non_conflict_time}, we construct a hash table  that for every scaffold $i\in\Scafs(\Ass)$ contains a list of assembly points that involve $i$ (which takes $\bigO{k}$ time). Then for every $i\in\Scafs(\Ass)$, we check if either of the corresponding assembly points provides an orientation for $i$; if not, we add $i$ to $S$. This check for each scaffolds takes $\bigO{1}$ time, bringing the total running time to $\bigO{k}.$  
\end{proof}
    
A pseudocode for the computation of $\Scafs_u(\Ass)$ described in the proof of Lemma~\ref{lem:compSu} is given in Algorithm~\ref{algo:computingSu} in the Appendix.

\subsection{Problem Formulations}  
\paragraph{Orientation of Ordered Scaffolds}  
For a non-conflicting assembly $\Ass$ composed only of oriented assembly points, an assembly point $p$ on scaffolds $s_i, s_j\in\Scafs(\Ass)$ has a \emph{consistent orientation with $\Ass$} if for some $\ar{p}\in\rlz(p)$ there exists a path connecting edges $s_i$ and $s_j$ in $\SAG(\Ass)$ such that direction of edges $s_i$ and $s_j$ at the path ends is consistent with $\ar{p}$ (e.g., in Fig.~\ref{fig:sag_og_cog}a, the assembly point $(\rd{s_1}, \rd{s_3})$ has a consistent orientation with the assembly $\Ass$). Furthermore, for a non-conflicting assembly $\Ass$ that has at least one non-conflicting realization, an assembly point $p$ has a \emph{consistent orientation with $\Ass$} if $\ar{p}$ has a consistent orientation with $\ar{\Ass}$ for some $\ar{p}\in\rlz(p)$ and $\ar{\Ass} \in \ncrlz(\Ass)$.

We formulate the orientation of ordered scaffolds problem as follows.
    
\begin{problem}[Orientation of Ordered Scaffolds, OOS]
Let $\Ass$ be an assembly and $\Orp$ be a set\footnote{More generally, $\Orp$ may be a multiset whose elements have real positive multiplicities (weights).} of assembly points such that $|\ncrlz(\Ass)|\geq 1$ and $\Scafs(\Orp)\subset\Scafs(\Ass)$. Find a non-conflicting realization $\ar{\Ass}\in\ncrlz(\Ass)$ that maximizes the number (total weight) of assembly points from $\Orp$ having consistent orientations with $\ar{\Ass}$.
\end{problem}
    
From the biological perspective, the OOS can be viewed as a formalization of the case where (sub)orders of scaffolds have been determined (which defines $\Ass$), while there exists some information (possibly coming from different sources and conflicting) about their relative orientation (which defines $\Orp$). The OOS asks to orient unoriented scaffolds in the given scaffold orders in a way that is most consistent with the given orientation information.
    
We also remark that the OOS can be viewed as a fine-grained variant of the scaffold orientation problem studied in~\cite{Bodily2015a}. In our terminology, the latter problem concerns an artificial circular genome $\Ass$ formed by the given scaffolds in an arbitrary order (so that there is a path connecting any scaffold or its reverse complement to any other scaffold in $\OG(\Ass)$), and $\Orp$ formed by unordered pairs of scaffolds supplemented with the binary information on whether each such pair come from the same or different strands of the genome. In contrast, in the OOS, the assembly $\Ass$ is given and $\OG(\Ass)$ does not have to be connected or non-branching, while $\Orp$ may provide a pair of scaffolds with up to four options (as in \eqref{eq:Rsjsj}) of their relative orientation.
    
\paragraph{Non-branching Orientation of Ordered Scaffolds}
At the latest stages of genome assembly, the constructed scaffolds are usually of significant length. If (sub)orders for these scaffolds are known, it is rather rare to have orientation-imposing information that would involve non-neighboring scaffolds. Or, more generally, it is rather rare to have orientation imposing information for one scaffold with respect to more than two other scaffolds. This inspires us to consider a special case of the OOS problem:
    
\begin{problem}[Non-branching Orientation of Ordered Scaffolds, NOOS]
Given an OOS instance $(\Ass,\Orp)$ such that the graph $\COG(\Orp)$ is non-branching. Find $\ar{\Ass}\in\ncrlz(\Ass)$ that maximizes the number of assembly points from $\Orp$ having consistent orientations with $\ar{\Ass}$.
\end{problem}
   
\subsection{$\NP$-hardness of the OOS}
We consider two important partial cases of the OOS, where the assembly $\Ass$ represents a linear or circular genome up to unknown orientations of the scaffolds. In these cases, the graph $\OG(\Ass)$ forms a collection of paths or cycles, respectively. Below we prove that the OOS in both these cases is $\NP$-{\rm hard}.
    
\begin{lem}\label{thm:OOSlin}
The OOS for linear genomes is $\NP$-{\rm hard}.
\end{lem}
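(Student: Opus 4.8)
The plan is to reduce from a known $\NP$-hard problem in which binary variables must be oriented to satisfy as many pairwise constraints as possible; the natural candidate is \MAX{}-\CUT, or more directly \textsc{Max-2-Lin}$(2)$ / the problem of maximizing agreements among equality and inequality constraints over $\{+,-\}$ variables. The key observation is that the OOS gives us, for each scaffold, a binary choice (the two orientations $\rd{s}$ and $\ld{s}$), and each assembly point in $\Orp$ constrains the \emph{relative} orientation of two scaffolds; consistency with a realization is a predicate on the pair of chosen orientations. So I would encode each vertex of a \MAX{}-\CUT{} instance as a scaffold, and each edge as an orientation point demanding that its two endpoints be oppositely oriented. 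Maximizing consistent orientation points then corresponds exactly to maximizing cut edges.

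First I would fix the linear-genome requirement: the reduction must produce an assembly $\Ass$ whose $\OG(\Ass)$ is a disjoint union of paths, so that a non-conflicting realization exists (by Theorem~\ref{thm:NR1}) and every realization is simply a free choice of orientation for each scaffold. The cleanest way is to take $\Ass$ with no adjacency constraints beyond trivial ones—i.e.\ make each scaffold its own connected component, or chain them into a single path with unoriented unoriented assembly points that impose no orientation—so that $\ncrlz(\Ass)$ is in bijection with all $2^n$ orientation assignments. Then the \MAX{}-\CUT{} edges are placed entirely in $\Orp$: for an edge $\{u,v\}$ I would add the orientation point that is consistent precisely when $u$ and $v$ receive opposite orientations, for instance by including both $(\rd{s_u},\ld{s_v})$ and $(\ld{s_u},\rd{s_v})$ as the rewarded realizations, or by choosing a single semi/unoriented point whose consistent realizations are exactly the "opposite-strand" ones. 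I would verify that "consistent orientation with $\ar{\Ass}$" as defined via paths in $\SAG(\ar{\Ass})$ reduces, for these gadgets, to the simple strand-agreement predicate on the two endpoint scaffolds.

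The main step is then the correctness equivalence: a realization $\ar{\Ass}$, i.e.\ an orientation assignment $\sigma:\Scafs\to\{+,-\}$, makes an $\Orp$-point for edge $\{u,v\}$ consistent if and only if $\sigma(u)\neq\sigma(v)$, so the number of consistent points equals the number of edges cut by the partition $\sigma^{-1}(+)\mathbin{/}\sigma^{-1}(-)$. Hence an optimal OOS realization yields a maximum cut and conversely, giving a polynomial-time mapping reduction that preserves the optimum value. Since \MAX{}-\CUT{} is $\NP$-hard and the construction is clearly polynomial (linear in the number of vertices and edges), the OOS for linear genomes is $\NP$-hard.

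The part I expect to require the most care is checking that the ordered/linear structure of $\Ass$ does not secretly constrain the orientations: I must ensure that the path adjacencies forced by a genuine linear genome do not themselves act as orientation points competing with, or interfering with, the $\Orp$-gadgets, and that the definition of consistent orientation through $\SAG$-paths behaves as the plain pairwise predicate even when $u$ and $v$ are far apart along the genome. If the linearity constraint turns out to over-restrict the instance, the fallback is to make the \MAX{}-\CUT{} edges short-range (consecutive scaffolds along the path) and to show \MAX{}-\CUT{} remains $\NP$-hard on the resulting bounded-structure family—or to route the gadget scaffolds so that each \MAX{}-\CUT{} vertex is represented by a dedicated scaffold whose orientation is otherwise unconstrained by $\Ass$.
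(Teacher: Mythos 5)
Your reduction is from $\MAX$-$\CUT$ rather than the paper's $\MAX\ 2$-$\DNF$, and the step you yourself flag as ``requiring the most care'' is exactly where it breaks as written. In a linear genome the path in $\SAG(\ar{\Ass})$ between two scaffold edges $u$ and $v$ (say $u$ to the left of $v$) is \emph{unique}: it leaves $u$ through its right extremity and enters $v$ through its left extremity. Hence a single oriented point $(\rd{u},\ld{v})$ is consistent with $\ar{\Ass}$ if and only if $u$ is forward \emph{and} $v$ is reverse --- a conjunction of two specific literals, not the symmetric predicate $\sigma(u)\neq\sigma(v)$. (The symmetric behaviour you want does hold for \emph{circular} genomes, where the second path around the cycle realizes the complementary combination; this is precisely why the paper reserves $\MAX$-$\CUT$ for the circular case in Lemma~\ref{thm:OOScir} and reduces the linear case from $\MAX\ 2$-$\DNF$, each oriented point encoding a two-literal conjunction.) For the same reason your alternative gadget --- a single semi-oriented or unoriented point whose consistent realizations are exactly the opposite-strand ones --- does not exist: a semi-oriented point's realization set fixes one scaffold's orientation, and in a path it is consistent iff that one scaffold has the fixed orientation irrespective of the other, while an unoriented point is always consistent (Lemma~\ref{lem:consistent}). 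Also, your ``cleanest'' choice of $\Ass$ (each scaffold its own component) is unavailable: $\Scafs(\Orp)\subseteq\Scafs(\Ass)$ forces every scaffold into some assembly point of $\Ass$, and by Lemma~\ref{lem:cons_conn} a point of $\Orp$ can be consistent only if both its scaffolds lie in one connected component of $\OG(\Ass)$, so isolated components would make every gadget inconsistent.

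The proposal is salvageable, and then genuinely different from the paper's proof: chain all vertex-scaffolds into one path of unoriented assembly points (your own fallback), and for each edge $\{u,v\}$ put \emph{both} oriented points $(\rd{u},\ld{v})$ and $(\ld{u},\rd{v})$ into $\Orp$. Exactly one of the two is consistent when $u$ and $v$ are oppositely oriented in $\ar{\Ass}$, and neither is consistent otherwise, so the number of consistent points equals the cut size and optima transfer in both directions. But as written --- with one point per edge and the equivalence ``consistent iff oppositely oriented'' asserted rather than verified --- the argument has a genuine gap, because that equivalence is simply false in the linear setting.
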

\begin{proof}
We will construct a polynomial-time reduction from the $\MAX\ 2$-$\DNF$ problem, which is known to be $\NP$-hard~\cite{Bazgan2003,Escoffier2005}. Given an instance $I$ of $\MAX\ 2$-$\DNF$ consisting of conjunctions $C = \{c_i\}_{i=1}^k$ on variables $X = \{x_i\}_{i=1}^n$, we define an assembly 
$$\Ass = \{ (0,x_1) \} \cup \{ (x_i, x_{i+1})\ :\ i=1,2,\dots,n-1\}.$$
We then construct a set of assembly points $\Orp$ from the clauses in $C$ as follows. For each clause $c\in C$ with two variables $x_i$ and $x_j$ ($i<j$), we add an oriented assembly point on scaffolds $x_i, x_j$ to $\Orp$ with the orientation depending on the negation of these variables in $c$ (i.e., a clause $x_i\wedge\overline{x_j}$ is translated into an assembly point $(\rd{x_i}, \ld{x_j})$). For each clause from $C$ with a single variable $x$, we add an assembly point $(\rd{0}, \rd{x})$ or $(\rd{0}, \ld{x})$ depending whether $x$ is negated in the clause. 
        
It is easy to see that the constructed assembly $\Ass$ is semi-conflicting and $\OG(\Ass)$ is a path, and thus by Theorem~\ref{thm:NR1} $\Ass$ has a non-conflicting realization. Hence, $\Ass$ and $\Orp$ form an instance of the OOS for linear genomes. A solution $\ar{\Ass}$ to this OOS provides an orientation for each $x\in \Scafs$ that maximizes the number of assembly points from $\Orp$ having consistent orientations with $\ar{\Ass}$. A solution to $I$ is obtained from $\ar{\Ass}$ as the assignment of $0$ or $1$ to each variable $x$ depending on whether the orientation of scaffold $x$ in $\ar{\Ass}$ is forward or reverse. Indeed, since each assembly point in $\Orp$ having consistent orientation with $\ar{\Ass}$ corresponds to a truthful clause in $I$, the number of such clauses is maximized. 
        
It is easy to see that the OOS instance and the solution to $I$ can be computed in polynomial time, thus we constructed a polynomial-time reduction from the $\MAX\ 2$-$\DNF$ to the OOS for linear genomes.        
\end{proof}
    
\begin{lem}\label{thm:OOScir}
The OOS for circular genomes is $\NP$-{\rm hard}.
\end{lem}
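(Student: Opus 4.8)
The plan is to reduce from the $\MAX$-$\CUT$ problem rather than from $\MAX\ 2$-$\DNF$. The reason for the switch is exactly the feature that distinguishes circular from linear genomes: in a circular assembly the graph $\SAG(\ar\Ass)$ of a realization is a single cycle, so between any two scaffold edges there are two internally disjoint paths, and this changes the meaning of a consistent orientation. I would first record this as the key lemma: for a realization $\ar\Ass$ of a one-cycle assembly in which scaffold $w$ receives orientation $o_w$, an oriented assembly point $(\rd u,\ld v)\in\Orp$ has a consistent orientation with $\ar\Ass$ if and only if $o_u\neq o_v$ (and, symmetrically, $(\rd u,\rd v)$ is consistent iff $o_u=o_v$). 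Thus a two-scaffold oriented point behaves not like a conjunctive clause but like a ``same/different'' constraint on a pair of binary variables, which is precisely the structure of $\MAX$-$\CUT$.

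Given a graph $G=(V,E)$ with $V=\{1,\dots,m\}$ (we may assume $m\geq 3$, as smaller instances are trivial), I would build the OOS instance
\[
\Ass=\{(i,i+1) : i=1,\dots,m-1\}\cup\{(1,m)\},\qquad
\Orp=\{(\rd u,\ld v) : \{u,v\}\in E,\ u<v\},
\]
where the scaffolds are identified with the vertices of $G$ and all assembly points of $\Ass$ are unoriented. Then $\OG(\Ass)$ is a single cycle (hence non-branching), every overlapping pair of points in $\Ass$ is semi-conflicting, so $\Ass$ is semi-conflicting and by Theorem~\ref{thm:NR1} has a non-conflicting realization; moreover $\Scafs(\Orp)\subseteq\Scafs(\Ass)$, so $(\Ass,\Orp)$ is a valid OOS instance whose $\OG(\Ass)$ is a cycle, i.e. a circular genome. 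Each $\ar\Ass\in\ncrlz(\Ass)$ assigns an orientation $o_w$ to every scaffold, which I read as a bipartition $V=V_0\sqcup V_1$; by the key lemma the number of points of $\Orp$ consistent with $\ar\Ass$ equals the number of edges $\{u,v\}\in E$ with $o_u\neq o_v$, i.e. the size of the corresponding cut. Hence an optimal realization corresponds to a maximum cut of $G$, and since the construction and the back-translation are clearly polynomial, this reduces $\MAX$-$\CUT$ (a classical $\NP$-hard problem) to the OOS for circular genomes.

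The main obstacle is proving the key lemma, and in particular handling the two-arc phenomenon correctly. I would argue it by fixing a rotational direction of the cycle $\SAG(\ar\Ass)$: traversing it in that direction reads every scaffold $w$ in its realization orientation $o_w$, while the opposite direction reads every $w$ as $\bar o_w$. Consequently the two arcs joining the scaffold edges $u$ and $v$ yield endpoint readings $(o_u,o_v)$ and $(\bar o_u,\bar o_v)$, and reading an arc in reverse only replaces a reading by its reverse complement and so contributes no further possibilities. The point $(\rd u,\ld v)$ is therefore consistent exactly when one of these readings equals it, i.e. when $(o_u,o_v)\in\{(\text{forward},\text{reverse}),(\text{reverse},\text{forward})\}$, which is the condition $o_u\neq o_v$. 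This is also where one sees why the linear reduction of Lemma~\ref{thm:OOSlin} cannot simply be transplanted: on a path only one arc exists, so an oriented point pins down the absolute orientations of both endpoints (a conjunction), whereas on a cycle the second arc relaxes it to a purely relative constraint. A final remark: the residual global-flip ambiguity (replacing every $o_w$ by $\bar o_w$) leaves every consistency verdict unchanged, matching the fact that swapping the two sides of a cut leaves its size unchanged, so the assignment-to-realization correspondence descends correctly to these equivalence classes.
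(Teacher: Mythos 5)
Your proof is correct and takes essentially the same route as the paper's: the paper also reduces from $\MAX$-$\CUT$ using the identical assembly $\Ass$ (an unoriented cycle on $V$) and the identical orientation set $\Orp=\left\{(\rd{u},\ld{v})\ :\ \{u,v\}\in E\right\}$, reading the forward/reverse orientations of a solution as the two sides of a cut. The only difference is one of detail: your ``key lemma'' (consistency of $(\rd{u},\ld{v})$ iff $o_u\neq o_v$, proved via the two-arc argument on the cycle) is exactly the correspondence between consistent assembly points and cut edges that the paper asserts without explicit justification, so your write-up is, if anything, more complete.
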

\begin{proof}
We construct a polynomial-time reduction from the $\MAX$-$\CUT$ problem, which is known to be $\NP$-hard~\cite{garey1976some,garey1979computers}. An instance $I$ of $\MAX$-$\CUT$  for a given a graph $(V, E)$ asks to partition the set of vertices $V = \{v_i\}_{i=1}^n$ into two disjoint subsets $V_1$ and $V_2$ such that the number of edges $\{u, v\}\in E$ with $u\in V_1$ and $v\in V_2$ is maximized. For a given instance $I$ of $\MAX$-$\CUT$ problem, we define the assembly 
$$\Ass = \left\{ (v_i,v_{i+1})\ :\ i=1,2,\dots,n-1\right\} \cup \left\{ (v_1,v_n) \right\}$$
and the set of assembly points 
$$\Orp = \left\{ (\rd{v_i}, \ld{v_j})\ :\ \{v_i, v_j\}\in E \right\}.$$
It is easy to see that $\Ass$ has a non-conflicting realization and $\OG(\Ass)$ is a cycle, i.e., $\Ass$ and $\Orp$ form an instance of the OOS for circular genomes. A solution $\ar{\Ass}$ to this OOS instance provides orientations for all elements $\Scafs(\Ass)=V$ that maximizes the number of assembly points from $\Orp$ having consistent orientations with $\ar{\Ass}$. A solution to $I$ is obtained as the partition of $V$ into two disjoint subsets, depending on the orientation of scaffolds in $\ar{\Ass}$ (forward vs reverse). Indeed, since each assembly point in $\Orp$ having a consistent orientation with $\ar{\Ass}$ corresponds to an edge from $E$ whose endpoints belong to distinct subsets in the partition, the number of such edges is maximized. 
        
It is easy to see that the OOS instance and the solution to $I$ can be computed in polynomial time, thus we constructed a polynomial-time reduction from the $\MAX$-$\CUT$ to the OOS for circular genomes.
\end{proof}
    
As a trivial consequence of Lemmas~\ref{thm:OOSlin} and \ref{thm:OOScir}, we obtain that the general OOS problem is $\NP$-{\rm hard}.
    
\begin{thm}
The OOS is $\NP$-hard.
\end{thm}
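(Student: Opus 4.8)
The plan is to obtain $\NP$-hardness of the general OOS as an immediate corollary of either of the two preceding lemmas, by observing that the OOS for linear genomes (and likewise for circular genomes) is literally a restriction of the general OOS to a subclass of inputs. First I would note that the general OOS takes as input an arbitrary pair $(\Ass,\Orp)$ with $|\ncrlz(\Ass)|\ge 1$ and $\Scafs(\Orp)\subset\Scafs(\Ass)$, and imposes no structural requirement on $\OG(\Ass)$; the linear-genome variant is exactly the case in which $\OG(\Ass)$ is a disjoint union of paths. Hence every instance produced by the reduction in Lemma~\ref{thm:OOSlin}, where $\OG(\Ass)$ is a single path, is already a valid instance of the general OOS, and an optimal realization for it under the general problem coincides with the optimum under the linear-genome problem.

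The key step is therefore to assemble the chain of polynomial-time reductions $\MAX\ 2\text{-}\DNF \le_p (\text{OOS for linear genomes}) \le_p \text{OOS}$, where the second reduction is simply the identity map on instances. Since the composition of polynomial-time reductions is polynomial-time, and $\MAX\ 2\text{-}\DNF$ is $\NP$-hard, the general OOS is $\NP$-hard. Equivalently, one may invoke Lemma~\ref{thm:OOScir} and use the circular-genome reduction from $\MAX$-$\CUT$; either lemma alone suffices, and citing both merely emphasizes that hardness persists even when $\OG(\Ass)$ is constrained to be a collection of paths or a collection of cycles.

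I do not anticipate a genuine obstacle here: the only point to verify carefully is that the hardness established in Lemma~\ref{thm:OOSlin} concerns a problem whose instances form a subset of the general OOS instances, so that a hypothetical polynomial-time algorithm for the general OOS would in particular solve the linear-genome case. Because the linear-genome condition is a restriction on the admissible inputs rather than a change in the objective function or the notion of a feasible non-conflicting realization, this containment is immediate, and no new construction or analysis is required beyond invoking the lemmas.
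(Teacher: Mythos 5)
Your proposal is correct and matches the paper's own argument: the paper likewise obtains this theorem as a trivial consequence of Lemmas~\ref{thm:OOSlin} and~\ref{thm:OOScir}, since the linear- and circular-genome cases are restrictions of the general OOS. Your elaboration that a restriction of the input class suffices for hardness to transfer is exactly the (implicit) content of the paper's one-line derivation.
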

    
\subsection{Properties of the OOS}
In this subsection, we formulate and prove some important properties of the OOS.

\subsubsection*{Connected Components of $\OG(\Ass)$}

Below we show that an OOS instance can also be solved independently for each connected component of $\OG(\Ass)$. We start with the following lemma that trivially follows from the definition of consistent orientation.

\begin{lem}\label{lem:cons_conn}
Let $\Ass$ be an assembly such that $|\ncrlz(\Ass)|\geq 1$. An assembly point on scaffolds $s_i, s_j\in\Scafs(\Ass)$ may have a consistent orientation with $\Ass$ only if both $s_i$ and $s_j$ belong to the same connected component in $\OG(\Ass)$.
\end{lem}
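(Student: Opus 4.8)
The plan is to unpack the definition of consistent orientation and show that it forces both scaffolds into a common connected component of $\OG(\Ass)$. Recall from the definition that an assembly point $p$ on scaffolds $s_i, s_j$ has a consistent orientation with $\Ass$ only if there is a realization $\ar{p}\in\rlz(p)$ and a realization $\ar{\Ass}\in\ncrlz(\Ass)$ such that $\SAG(\ar{\Ass})$ contains a path connecting the scaffold-edges $s_i$ and $s_j$ with orientations at the path ends matching $\ar{p}$. So the entire content of the claim is contained in the phrase ``there exists a path connecting $s_i$ and $s_j$ in $\SAG(\ar{\Ass})$.''

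First I would fix such a path $P$ in $\SAG(\ar{\Ass})$, witnessing consistency. The key observation is the relationship between paths in $\SAG(\ar{\Ass})$ and paths in $\OG(\ar{\Ass})$: by the construction described in the Backgrounds section, $\OG(\ar{\Ass})$ is obtained from $\SAG(\ar{\Ass})$ by contracting the directed (scaffold) edges and then merging parallel edges coming from the same assembly point. Contracting the scaffold edge $(s^t, s^h)$ identifies the two extremity-vertices $s^t$ and $s^h$ into the single vertex $s$ of $\OG(\ar{\Ass})$. Therefore the image of $P$ under this contraction is a walk in $\OG(\ar{\Ass})$ from $s_i$ to $s_j$, which certifies that $s_i$ and $s_j$ lie in the same connected component of $\OG(\ar{\Ass})$.

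It then remains to transfer this conclusion from $\OG(\ar{\Ass})$ back to $\OG(\Ass)$. Since $\Ass$ is proper (as assumed throughout, given $|\ncrlz(\Ass)|\geq 1$), the order graph is invariant under realization, i.e. $\OG(\ar{\Ass})=\OG(\Ass)$; this is exactly the identity already invoked in the proof of Theorem~\ref{thm:NR1}. Indeed, $\OG(\Ass)$ depends only on the unordered pairs $\{\sn(p,1),\sn(p,2)\}$ of scaffolds in each assembly point, and orienting the scaffolds does not change which scaffolds each assembly point involves. Hence the walk produced above actually lives in $\OG(\Ass)$, so $s_i$ and $s_j$ belong to the same connected component of $\OG(\Ass)$, as required.

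I do not anticipate a serious obstacle here, since the lemma is flagged as trivially following from the definition; the only point requiring minor care is the bookkeeping of the contraction map $\SAG(\ar{\Ass})\to\OG(\ar{\Ass})$, namely checking that a \emph{path} in $\SAG$ maps to a connected walk in $\OG$ (as opposed to something that might break connectivity). This is immediate because edge contraction never disconnects the endpoints of an existing path, and the subsequent merging of parallel edges preserves connectivity as well. Once this map is in place, the statement follows directly.
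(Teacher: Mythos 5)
Your proof is correct and is essentially the paper's own argument: the paper states this lemma as following trivially from the definition of consistent orientation and offers no further proof, and your write-up simply makes that triviality explicit (a witnessing path in $\SAG(\ar{\Ass})$ projects, under contraction of the scaffold edges, to a walk in $\OG(\ar{\Ass})=\OG(\Ass)$, so $s_i$ and $s_j$ share a connected component). Nothing more is needed.
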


\begin{thm}\label{thm:splitOGA}
Let $(\Ass,\Orp)$ be an OOS instance, and $\Ass = \Ass_1 \cup \dots \cup \Ass_k$ be the partition such that $\OG(\Ass_1),\dots,\OG(\Ass_k)$ represent the connected components of $\OG(\Ass)$. For each $i=1,2,\dots,k$, define $\Orp_i = \{ p\in\Orp\ :\ \sn(p,1),\sn(p,2)\in\Scafs(\Ass_i)\}$ and let $\ar{\Ass}_i$ be a solution to the OOS instance $(\Ass_i,\Orp_i)$. Then $\ar{\Ass}_1\cup\dots\cup\ar{\Ass}_k$ is a solution to the OOS instance $(\Ass,\Orp)$.
\end{thm}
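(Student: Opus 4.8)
The plan is to prove the theorem by establishing two things: first, that the objective function of the OOS decomposes as a sum over the connected components, and second, that each component's contribution can be optimized independently without any interaction between components. The key structural fact enabling this is Lemma~\ref{lem:cons_conn}, which tells us that an assembly point $p\in\Orp$ on scaffolds $s_i,s_j$ can have a consistent orientation with $\ar{\Ass}$ only if $s_i$ and $s_j$ lie in the same connected component of $\OG(\Ass)$. Since we assumed $\Scafs(\Orp)\subset\Scafs(\Ass)$, every such $p$ has both its scaffolds in some $\Scafs(\Ass_i)$; but without Lemma~\ref{lem:cons_conn} an assembly point whose endpoints straddle two components could in principle contribute, so the first step is to invoke this lemma to discard all such cross-component assembly points as necessarily inconsistent. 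This shows $\Orp = \Orp_1\sqcup\dots\sqcup\Orp_k\sqcup\Orp_{\mathrm{cross}}$, where the elements of $\Orp_{\mathrm{cross}}$ (those spanning distinct components) never contribute to the objective for any realization.

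Next I would argue that the objective value is additive across components. For a realization $\ar{\Ass}$, write $\ar{\Ass} = \ar{\Ass}_1\cup\dots\cup\ar{\Ass}_k$ according to the component partition (this is well-defined because $\Ass = \Ass_1\cup\dots\cup\Ass_k$ and the orientation of scaffolds within $\Ass_i$ is determined only by the restriction $\ar{\Ass}_i$). The crucial observation is that whether a given $p\in\Orp_i$ has a consistent orientation with $\ar{\Ass}$ depends only on $\ar{\Ass}_i$: by the definition of consistent orientation, consistency is witnessed by a path in $\SAG(\ar{\Ass})$ connecting the two scaffolds of $p$, and since both scaffolds lie in component $i$, any such path stays entirely within the connected component corresponding to $\Ass_i$. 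Therefore the count (or total weight) of consistent assembly points from $\Orp$ equals $\sum_{i=1}^k (\text{consistent points from }\Orp_i\text{ with }\ar{\Ass}_i)$, and this sum is maximized exactly when each summand is maximized, i.e.\ when each $\ar{\Ass}_i$ is an optimal solution to $(\Ass_i,\Orp_i)$.

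Finally I would check that $\ar{\Ass}_1\cup\dots\cup\ar{\Ass}_k$ is itself a valid non-conflicting realization of $\Ass$, i.e.\ an element of $\ncrlz(\Ass)$. This holds because conflicts can only arise between overlapping assembly points sharing a scaffold, and any two overlapping assembly points of $\Ass$ necessarily belong to the same $\Ass_i$ (they share a vertex of $\OG(\Ass)$, hence lie in one component); since each $\ar{\Ass}_i\in\ncrlz(\Ass_i)$ is non-conflicting, the union introduces no new conflicts across components. Combining the three steps—no cross-component point ever contributes, the objective is additive and separately maximized by the $\ar{\Ass}_i$, and the union is a legitimate non-conflicting realization—yields that $\ar{\Ass}_1\cup\dots\cup\ar{\Ass}_k$ attains the global maximum and is therefore a solution to $(\Ass,\Orp)$.

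I expect the main obstacle to be making the phrase ``consistency of $p$ depends only on $\ar{\Ass}_i$'' fully rigorous: one must verify from the definition of consistent orientation that the witnessing path in $\SAG(\ar{\Ass})$ cannot leave component $i$, which is intuitively clear but requires noting that $\SAG$ inherits the same connected-component structure as $\OG$ (contracting directed scaffold edges and merging parallel realization edges does not merge or split components). The remaining steps are essentially bookkeeping once this localization is secured.
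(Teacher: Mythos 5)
Your proof is correct and follows essentially the same route as the paper's: invoke Lemma~\ref{lem:cons_conn} to discard cross-component assembly points, then observe that the objective decomposes additively over the components and is maximized by optimizing each $(\Ass_i,\Orp_i)$ independently. The extra details you supply (the witnessing path in $\SAG(\ar{\Ass})$ staying inside one component, and the union being a legitimate non-conflicting realization) are points the paper leaves implicit, so your write-up is a slightly more careful version of the same argument.
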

\begin{proof}
Lemma~\ref{lem:cons_conn} implies that we can discard from $\Orp$ all assembly points that are formed by scaffolds from different connected components in $\OG(\Ass)$. Hence, we may assume that $\Orp = \Orp_1\cup\dots\cup\Orp_k\}$.

Lemma~\ref{lem:cons_conn} further implies that an assembly point from $\Orp_i$ may have a consistent orientation with $\Ass_j$ only if $i=j$. Therefore, any solution to the OOS instance $(\Ass,\Orp)$ is formed by the union of solutions to the OOS instances $(\Ass_i,\Orp_i)$.
\end{proof}

Theorem~\ref{thm:splitOGA} allows us focus on instances of the OOS, where $\OG(\Ass)$ is connected and thus forms a path or a cycle (by Theorem~\ref{thm:NR1}). 

\subsubsection*{Connected Components of $\OG(\Orp)$}

Below we show that an OOS instance can also be solved independently for each connected component of $\OG(\Orp)$. We need the following lemma that trivially holds.
        
\begin{lem}\label{lem:consistent}
Let $\Ass$ be an assembly such that $|\ncrlz(\Ass)|\geq 1$, and $s_i,s_j$ be scaffolds from the same connected component $C$ in $\OG(\Ass)$. Then an unoriented assembly point $(s_i,s_j)$ has a consistent orientation with $\Ass$. Furthermore, if $C$ is a cycle, then any semi-oriented assembly point on $s_i,s_j$ has a consistent orientation with $\Ass$. 
\end{lem}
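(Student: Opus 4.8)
The plan is to fix once and for all an arbitrary non-conflicting realization $\ar{\Ass}\in\ncrlz(\Ass)$ (which exists since $|\ncrlz(\Ass)|\geq 1$) and to read off the required consistent orientation directly from a path in $\SAG(\ar{\Ass})$. First I would observe that passing from $\Ass$ to $\ar{\Ass}$ does not change the order graph: $\OG$ depends only on the unordered scaffold pairs $\{\sn(p,1),\sn(p,2)\}$, and a realization merely orients scaffolds, so $\OG(\ar{\Ass})=\OG(\Ass)$ and $s_i,s_j$ still lie in the same component $C$. Since $\ar{\Ass}$ is non-conflicting, Lemma~\ref{lem:non-branching} guarantees that $\OG(\ar{\Ass})$ is non-branching, so $C$ is either a path or a cycle, and in either case connected. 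Connectivity of $C$ in $\OG(\ar{\Ass})$ lifts to connectivity of the corresponding subgraph of $\SAG(\ar{\Ass})$ (scaffolds are edges joined by realized assembly-point edges), so there is a simple path having the scaffold edges $s_i$ and $s_j$ as its two terminal edges. The directions in which such a path traverses $s_i$ and $s_j$ (forward if $s^t\to s^h$, reverse otherwise) determine a unique oriented assembly point $\ar{p}$ on $s_i,s_j$.

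For the first statement I would simply note that an unoriented assembly point $(s_i,s_j)$ has all four oriented points on $s_i,s_j$ as its realizations (cf.\ \eqref{eq:Rsjsj}). Hence whatever oriented adjacency $\ar{p}$ the chosen path certifies, we have $\ar{p}\in\rlz((s_i,s_j))$, so $(s_i,s_j)$ has a consistent orientation with $\ar{\Ass}$, and therefore with $\Ass$. This step needs no case analysis and works whether $C$ is a path or a cycle.

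For the second statement I would exploit the extra freedom a cycle provides. Let $p$ be semi-oriented, and assume without loss of generality that $s_i$ is its oriented scaffold with a prescribed orientation; the two realizations of $p$ then agree on the orientation of $s_i$ and range over both orientations of $s_j$. Since $C$ is a cycle, there are exactly two simple paths (the two arcs) joining the edges $s_i$ and $s_j$, and they traverse $s_i$ in opposite directions. I would select the arc that traverses $s_i$ in the orientation prescribed by $p$; this arc certifies an oriented point whose $s_i$-component is the prescribed one and whose $s_j$-component is some orientation of $s_j$, which necessarily coincides with one of the two realizations of $p$. Therefore $p$ has a consistent orientation with $\ar{\Ass}$.

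The only real subtlety, and the step I would be most careful with, is this last selection of the arc, as it is exactly where the cycle hypothesis is used. In a path component the path between $s_i$ and $s_j$ is unique and traverses $s_i$ in a single fixed direction, so if that direction disagrees with the orientation prescribed by the semi-oriented point there is no way to certify it (short of reverse-complementing the whole component, which is unavailable when $C$ contains a pre-oriented scaffold). A cycle removes this obstruction by supplying a second arc with the opposite traversal of $s_i$, guaranteeing that the prescribed orientation of the oriented scaffold can always be realized.
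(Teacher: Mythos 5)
Your proof is correct: fixing an arbitrary non-conflicting realization, lifting connectivity from $\OG(\ar{\Ass})$ to $\SAG(\ar{\Ass})$ via Lemma~\ref{lem:non-branching}, and then using the fact that the two arcs of a cycle traverse the edge $s_i$ in opposite directions (so one of them can always match the prescribed orientation of the oriented scaffold) is exactly the verification the statement needs. Note that the paper offers no proof at all --- it asserts the lemma ``trivially holds'' --- so your argument simply spells out the intended routine details, and it correctly isolates the one non-obvious point, namely where the cycle hypothesis is used for the semi-oriented case.
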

    
By Lemma~\ref{lem:consistent}, we can assume that $\Orp$ does not contain any unoriented assembly points (i.e., $\Orp = \Orp_o \cup \Orp_s$). Furthermore, if $\OG(\Ass)$ is a cycle, we can assume that $\Orp=\Orp_o$ (i.e., $\Orp$ consists of oriented assembly points only). We consider two cases depending on whether $\OG(\Ass)$ forms a path or a cycle.
        
\paragraph*{\textbf{$\OG(\Ass)$ is a path}} 
Suppose that $\OG(\Ass) = (s_1,s_2,\dots,s_n)$ is a path and $\Orp = \Orp_o \cup \Orp_s$. Let ${\cal C}$ be the set of connected components of $\OG(\Orp)$. 

Consider any $C \in {\cal C}$. Let $(s_{j_1},\dots,s_{j_{m}})$ be a vertex sequence of $C$ such that $j_{1}<j_{2}<\dots<j_{m}$, where $m$ is the number of vertices in $C$. We define an assembly $\Ass_{C}$ such that $\OG(\Ass_{C})$ is the path $(x,s_{j_1},\dots,s_{j_{m}},y)$, where $x$ and $y$ are artificial vertices, and the assembly points in $\Ass_{C}$ (corresponding to the edges in $\OG(\Ass_{C})$) are oriented or semi-oriented as follows.

\begin{itemize}
\item The edges $\{x,s_{j_{1}}\}$ and $\{s_{j_{m}},y\}$ correspond to semi-oriented assembly points $(\rd{x},s_{j_{1}})$ and $(s_{j_{m}},\rd{y})$, respectively;
\item For any $l \in \{1,\dots,m - 1\}$, the assembly point corresponding to the edge $\{s_{j_{l}},s_{j_{l+1}}\}$ is inherited from the assembly points corresponding to edges $\{s_{j_{l}},s_{j_{l}+1}\}$ and $\{s_{j_{l+1}-1},s_{j_{l+1}}\}$.
\end{itemize}

We further define $\Orp_C$ as a set formed by the assembly points from $C$ and the following assembly points. For each semi-oriented assembly point $p\in\Orp$ formed by scaffolds $s_m$ and $s_l$ ($m<l$), $\Orp_C$ contains: 
\begin{itemize}
    \item an oriented point $p'$ formed by $s_m$ and $\rd{y}$ whenever $s_m$ is oriented in $p$ and belongs to $C$ (and its orientation in $p'$ is inherited from $p$);
    \item an oriented point $p''$ formed by $\rd{x}$ and $s_l$ whenever $s_l$ is oriented in $p$ and belongs to $C$ (and its orientation in $p''$ is inherited from $p$) (Fig.~\ref{fig:og_o_cc_decom}).
\end{itemize}

Now, we assume that $\Ass_C$ and $\Orp_C$ ($C \in {\cal C}$) are defined as above for the OOS instance $(\Ass,\Orp)$. For each $C \in {\cal C}$, let $\ar{\Ass}_C$ be a solution to the OOS instance $(\Ass_C,\Orp_C)$. We construct a non-conflicting realization $\ar{\Ass}\in \ncrlz(\Ass)$ as follows:
\begin{itemize}
    \item for a scaffold $s$ \emph{present} in some $\ar{\Ass}_C$, $\ar{\Ass}$ inherits the orientation of $s$ from $\ar{\Ass}_i$;
    \item for a scaffold $s$ \emph{not present} in any $\ar{\Ass}_C$, $\ar{\Ass}$ inherits the orientation of $s$ from $\Ass$ if $s$ is oriented in any assembly point of $\Ass$, or otherwise has $s$ arbitrarily oriented.
\end{itemize}
The following theorem shows that constructed $\ar{\Ass}$ is a solution to the OOS instance $(\Ass,\Orp)$.

\begin{figure}[!t]
        \centering
        \includegraphics[width=\textwidth]{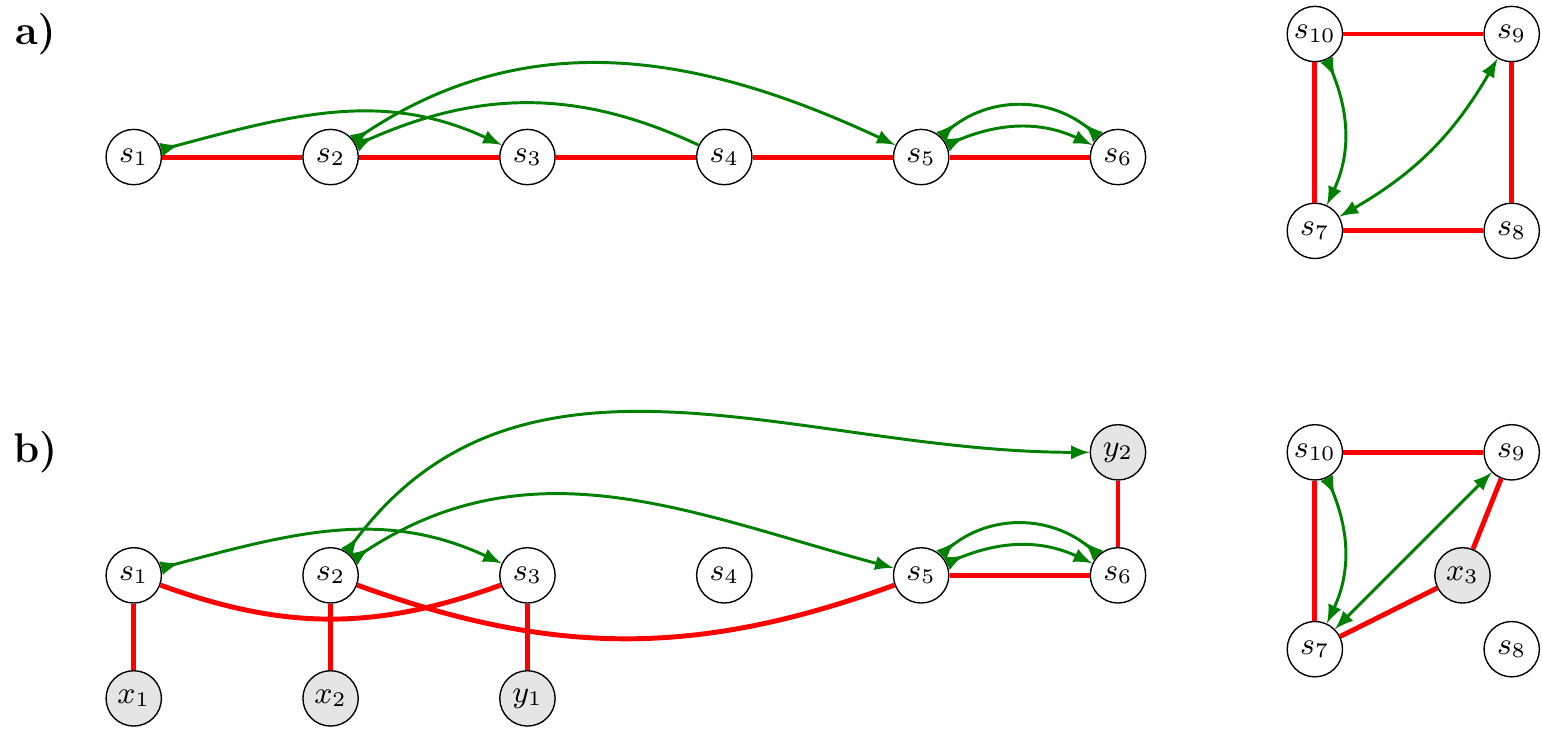}
        \caption{Decomposition of an OOS problem instance $(\Ass, \Orp)$ based on the connected components of $\OG(\Orp_o)$. 
          \textbf{a)} The superposition of $\OG(\Ass)$ (red edges) and $\OG(\Orp)$ (green edges), where arrows (if present) at the ends of green edges encode the orientation of the scaffolds in the corresponding assembly points. 
          \textbf{b)} The superposition of five graphs $\OG(\Ass_i)$ (red edges) and three graphs $\OG(\Orp_j)$ (green edges) constructed based on the connected components of $\OG(\Orp_o)$. 
          Unless $\OG(\Ass_i)$ is formed by an isolated vertex, it contains artificial vertices $x_i$ and $y_i$, which coincide if $\OG(\Ass_i)$ is a cycle.}
        \label{fig:og_o_cc_decom}
\end{figure}
    
\begin{thm}\label{thm:splitOGO}
Let $(\Ass,\Orp)$ be an OOS instance, and $\ar{\Ass}\in \ncrlz(\Ass)$ be defined as above. Then $\ar{\Ass}$ is a solution to the OOS instance $(\Ass,\Orp)$.
\end{thm}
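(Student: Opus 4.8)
The plan is to prove the theorem by showing that the objective of the OOS instance $(\Ass,\Orp)$ decomposes additively over the connected components of $\OG(\Orp)$, so that gluing optimal local solutions $\ar{\Ass}_C$ yields a global optimum. Since every assembly point of $\Orp$ is an edge of $\OG(\Orp)$, its two endpoints lie in a single component $C\in{\cal C}$; hence ${\cal C}$ partitions $\Orp$ and the number of consistent points of any realization splits as a sum over $C\in{\cal C}$. The first step is therefore to establish that, for a realization $\ar{\Ass}\in\ncrlz(\Ass)$, whether a point $p\in\Orp$ on scaffolds $s_i,s_j$ is consistent with $\ar{\Ass}$ depends only on data local to the component of $p$: for an oriented $p$ it depends solely on the orientations of $s_i,s_j$ in $\ar{\Ass}$ and on the rigid/flexible pattern of the sub-path of $\OG(\Ass)$ between them, while for a semi-oriented $p$ it depends only on the orientation of its oriented endpoint relative to the path. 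This follows from the definition of consistent orientation together with Lemmas~\ref{lem:cons_conn} and~\ref{lem:consistent}.

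The second step is to verify that the local instance $(\Ass_C,\Orp_C)$ faithfully reproduces this contribution. Two features of the construction must be justified. First, the \emph{inherited} assembly point on each edge $\{s_{j_l},s_{j_{l+1}}\}$ of $\OG(\Ass_C)$ must impose exactly the relative-orientation constraint that the intermediate segment $s_{j_l},s_{j_l+1},\dots,s_{j_{l+1}}$ of $\OG(\Ass)$ imposes on the pair $s_{j_l},s_{j_{l+1}}$; I would argue that every relative orientation of $s_{j_l}$ and $s_{j_{l+1}}$ permitted by the inherited point is realizable by some non-conflicting orientation of the intermediate scaffolds (which, being absent from $\Orp$, are free), and conversely. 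Second, the artificial anchors $\rd{x},\rd{y}$, being fixed, supply a reference frame at the ends of the local path that converts each semi-oriented $\Orp$-point into the oriented point $p'$ or $p''$ against $x$ or $y$; I would check that, because the local path $(x,s_{j_1},\dots,s_{j_m},y)$ traverses the scaffolds in their global path order, matching an oriented endpoint against its anchor is equivalent to consistency of the original point in the global instance. Together these checks yield, for any $\ar{\Ass}\in\ncrlz(\Ass)$, a restriction to a realization $\ar{\Ass}_C\in\ncrlz(\Ass_C)$ preserving the number of consistent points of $C$, and conversely a gluing of local realizations into a global one.

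With this correspondence in hand, the conclusion is a standard independent-maximization argument. For the glued $\ar{\Ass}$ defined in the statement, the number of consistent $\Orp$-points equals $\sum_{C}\mathrm{obj}_C(\ar{\Ass}_C)$, where each $\ar{\Ass}_C$ is an optimal local solution; the orientations of scaffolds outside every component are irrelevant, since such scaffolds appear in no point of $\Orp$, and they can always be fixed so as to obtain a non-conflicting realization by Theorem~\ref{thm:NR1}. On the other hand, any realization of $\Ass$ restricts to feasible local realizations whose per-component consistency counts sum to its total objective, and hence cannot exceed $\sum_{C}\mathrm{obj}_C(\ar{\Ass}_C)$. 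Therefore $\ar{\Ass}$ attains the maximum and is a solution to $(\Ass,\Orp)$, paralleling the component-wise reduction already carried out in Theorem~\ref{thm:splitOGA}.

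I expect the main obstacle to be the faithful-correspondence step, and in particular the bookkeeping of head/tail extremities: proving that the inherited assembly points capture precisely the achievable relative orientations of consecutive component scaffolds, and that the fixed anchors reproduce the consistency condition of semi-oriented points with the correct orientation sign. Establishing both directions of this correspondence uniformly, while tracking the orientation conventions of Section~2, is where the real work lies; once it is done, the additivity of the objective and the optimality of $\ar{\Ass}$ follow immediately.
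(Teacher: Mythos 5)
Your proposal takes essentially the same route as the paper's proof: establish a consistency-preserving correspondence between the global realization $\ar{\Ass}$ and the local solutions $\ar{\Ass}_C$ (oriented points of $\Orp$ transfer because the corresponding paths in $\SAG(\ar{\Ass})$ and $\SAG(\ar{\Ass}_C)$ have the same directed edges at their ends, while semi-oriented points transfer via the fixed anchors $\rd{x}$, $\rd{y}$), and then conclude by additivity of the objective over components together with an exchange argument. The paper realizes the correspondence concretely through unique paths in the scaffold assembly graphs rather than your more abstract ``faithfulness'' formulation, but the decomposition, the oriented/semi-oriented case split, and the final optimality argument are the same.
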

\begin{proof}
The graph $\SAG(\ar{\Ass})$ can be viewed as an ordered sequence of directed scaffold edges (interweaved with undirected edges encoding assembly points). Then each $\SAG(\ar{\Ass}_i)$, with the exception of scaffold edges $x_i$ and $y_i$, corresponds to a subsequence of this sequence.

Each oriented assembly point $p\in\Orp$ is formed by scaffolds $u,v$ from $C_i$ for some $i\in\{1,\dots,k\}$. Then $p\in\Orp\cap\Orp_i$ and there exist a unique path in $\SAG(\ar{\Ass}_i)$ and a unique path in $\SAG(\ar{\Ass})$ having the same directed edges $u,v$ at the ends. Hence, if $p$ has a consistent orientation with one of assemblies $\ar{\Ass}$ or $\ar{\Ass}_i$, then it has a consistent orientation with the other.

Each semi-oriented assembly point $p\in\Orp$ formed by scaffold $u,v$ corresponds to an oriented assembly point $q\in\Orp_i$ (for some $i$) formed by $u$ and $y_i$ (in which case $u\in C_i$ and $u$ is oriented in $p$), or by $x_i$ and $v$ (in which case $v\in C_i$ and $v$ is oriented in $p$). Without loss of generality, we assume the former case. Then there exists a unique path $Q$ in $\SAG(\ar{\Ass}_i)$ connecting directed edges $u$ and $y_i$, and there exists a unique path $P$ in $\SAG(\ar{\Ass})$ connecting directed edges $u$ and $v$, where the orientation of $u$ is the same in the two paths. By construction, the orientation of $y_i$ in $q$ matches that in $Q$. Hence, 
$q$ has a consistent orientation with $\ar{\Ass}_i$ if and only if the orientation of $u$ in $q$ matches that in $Q$, which happens if and only if the orientation of $u$ in $p$ matches its orientation in $P$, i.e., $p$ has a consistent orientation with $\ar{\Ass}$. 
We proved that the number of assembly points from $\Orp$ having consistent orientation with $\ar{\Ass}$ equals the total number of assembly points from $\Orp_i$ having consistent orientation with $\ar{\Ass}_i$ for all $i=1,2,\dots,k$. It remains to notice that this number is maximum possible, i.e., $\ar{\Ass}$ is indeed a solution to the OOS instance $(\Ass,\Orp)$ (if it is not, then the sets $\Ass_i$ constructed from $\Ass$ being an actual solution to the OOS will give a better solution to at least one of the subproblems).
\end{proof}

\paragraph*{\textbf{$\OG(\Ass)$ is a cycle}} 
In this case, we can construct subproblems based on the connected components of $\OG(\Orp)$ similarly to Case 1, with the following differences. First, by Lemma~\ref{lem:consistent}, we assume that $\Orp=\Orp_o$ (discarding all unoriented and semi-oriented assembly points from $\Orp$).
Second, we assume that $x_i=y_i$ and thus $\OG(\Ass_i)$ forms a cycle. Theorem~\ref{thm:splitOGO} still holds in this case.

\subsubsection*{Articulation vertices in $\OG(\Orp)$}
While Theorem~\ref{thm:splitOGO} allows us to divide the OOS problems into subproblems based on the connected components of $\OG(\Orp)$, we show below that similar division is possible when $\OG(\Orp)$ is connected but contains an articulation vertex.\footnote{We remind that a vertex is \emph{articulation} if its removal from the graph increases the number of connected components.} 

A vertex $v$ in $\OG(\Orp)$ (or in $\COG(\Orp)$) is called \emph{oriented} if $v\in \Scafs_o(\Ass)$. Otherwise, $v$ is called \emph{unoriented}. Let $(\Ass, \Orp)$ be an instance of the OOS problem such that both $\OG(\Ass)$ and $\OG(\Orp)$ are connected. Let $v$ be an oriented articulation vertex in $\OG(\Orp)$, defining a partition of $\Scafs(\Orp)$ into disjoint subsets:

\begin{equation}\label{eq:partV}
\Scafs(\Orp) = \{v\} \cup V_1 \cup V_2 \cup \dots \cup V_k,
\end{equation}
where $k>1$ and the $V_i$ represent the vertex sets of the connected components resulted from removal of $v$ from $\OG(\Orp)$. To divide the OOS instance $(\Ass, \Orp)$ into subinstances, we construct a new OOS instance $(\hat\Ass, \hat\Orp)$ as follows. 

We introduce copies $v_1, \dots, v_k$ of $v$, and construct $\hat\Ass$ from $\Ass$ by replacing a path $(u,v,w)$ in $\OG(\Ass)$ with a path $(u,v_1,v_2,\dots,v_k,w)$ where all $v_i$ inherit the orientation from $v$. Then we construct $\hat\Orp$ from $\Orp$ by replacing in each assembly point $p$ formed by $v$ and $u\in V_i$ (for some $i\in\{1,2,\dots,k\})$ with an assembly point formed by $v_i$ and $u$ (keeping their orientations intact). 


The OOS instance $(\hat\Ass, \hat\Orp)$ enables application of Theorem~\ref{thm:splitOGO}. Indeed, by construction, the vertex sets of the connected components of $\OG(\hat\Orp)$ are $\{v_i\} \cup V_i$, where $i \in \{1,2,\dots,k\}$. Hence, by Theorem~\ref{thm:splitOGO} the OOS instance $(\hat\Ass, \hat\Orp)$ can solved by dividing into OOS subinstances corresponding to the connected components of $\OG(\hat\Orp)$.  

Now, we assume that we have a solution to the OOS instance $(\hat\Ass, \hat\Orp)$. We construct a non-conflicting realization $\ar{\Ass}\in \ncrlz(\Ass)$ from a solution to the OOS instance $(\hat\Ass, \hat\Orp)$ by replacing every scaffold $v_i$ with $v$. 

The following theorem shows that the constructed $\ar{\Ass}$ is a solution to the OOS instance $(\Ass,\Orp)$.

\begin{thm}\label{thm:atr_vert}
Let $(\Ass, \Orp)$ be an OOS instance such that both $\OG(\Ass)$ and $\OG(\Orp)$ are connected, and $\ar{\Ass}$ be defined as above. Then $\ar{\Ass}$ is a solution to the OOS instance $(\Ass,\Orp)$.

\end{thm}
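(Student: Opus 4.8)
The plan is to show that the vertex-splitting operation underlying the construction of $(\hat\Ass,\hat\Orp)$ induces two bijections---one between $\ncrlz(\hat\Ass)$ and $\ncrlz(\Ass)$, and one between $\hat\Orp$ and $\Orp$---that jointly preserve the property of having a consistent orientation, and then to transfer optimality from $(\hat\Ass,\hat\Orp)$ back to $(\Ass,\Orp)$.

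First I would exploit that $v$ is an \emph{oriented} articulation vertex, i.e.\ $v\in\Scafs_o(\Ass)$, so the orientation of $v$ is identical in every realization from $\ncrlz(\Ass)$. By construction each copy $v_1,\dots,v_k$ inherits this orientation in $\hat\Ass$, hence all copies are oriented identically in every realization from $\ncrlz(\hat\Ass)$. Therefore merging the copies $v_i$ back into a single scaffold $v$ (and leaving every other orientation unchanged) is a well-defined map $\ncrlz(\hat\Ass)\to\ncrlz(\Ass)$; it is a bijection because the inverse simply splits $v$ into identically oriented copies. In particular, the realization $\ar{\Ass}$ defined above genuinely lies in $\ncrlz(\Ass)$.

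Next I would pair each assembly point of $\Orp$ with one of $\hat\Orp$: points not involving $v$ are left unchanged, while a point on $v$ and $u\in V_i$ is matched with the point on $v_i$ and $u$ carrying the same orientations. The heart of the argument---and the step I expect to be the main obstacle---is to prove that, for a pair of corresponding realizations $\ar{\Ass}\leftrightarrow\ar{\hat\Ass}$, an assembly point $p\in\Orp$ has a consistent orientation with $\ar{\Ass}$ if and only if its matched point $\hat p\in\hat\Orp$ has a consistent orientation with $\ar{\hat\Ass}$. The key observation is that $\SAG(\ar{\hat\Ass})$ is obtained from $\SAG(\ar{\Ass})$ by substituting the single directed scaffold edge $v$ with a chain of identically directed edges $v_1,\dots,v_k$. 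Because every edge of this chain points the same way, any path between two scaffold edges in one graph corresponds to a path between the same endpoints in the other, with identical edge directions at the path ends; the chain only lengthens the path and never reverses a traversal. For a point on $v$ and $u$ this additionally uses that $v_i$ carries the same orientation as $v$, so the direction at the $v$-side endpoint is unaffected by which copy is used. Hence a consistency-witnessing path exists on one side exactly when it does on the other.

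Finally, combining the two bijections shows that for every pair of corresponding realizations the number (total weight) of points of $\Orp$ consistent with $\ar{\Ass}$ equals the number (total weight) of points of $\hat\Orp$ consistent with $\ar{\hat\Ass}$. By construction the connected components of $\OG(\hat\Orp)$ have vertex sets $\{v_i\}\cup V_i$, so Theorem~\ref{thm:splitOGO} guarantees that the assembled solution to $(\hat\Ass,\hat\Orp)$ maximizes the latter quantity. Through this objective-preserving bijection of realizations, the corresponding $\ar{\Ass}$ then maximizes the former, which is exactly the statement that $\ar{\Ass}$ is a solution to the OOS instance $(\Ass,\Orp)$.
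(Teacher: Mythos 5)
Your proposal is correct and follows essentially the same route as the paper's proof: both hinge on the observation that $\SAG(\ar{\Ass})$ and $\SAG(\ar{\hat\Ass})$ differ only by replacing the directed scaffold edge $v$ with a chain of identically oriented copies $v_1,\dots,v_k$, so consistency-witnessing paths correspond one-to-one, and optimality then transfers from the instance $(\hat\Ass,\hat\Orp)$ (solved via Theorem~\ref{thm:splitOGO}) back to $(\Ass,\Orp)$. Your write-up is in fact somewhat more explicit than the paper's, since you spell out the bijection $\ncrlz(\hat\Ass)\to\ncrlz(\Ass)$ (justified by $v\in\Scafs_o(\Ass)$) and the objective-preserving transfer of maximality, both of which the paper leaves implicit.
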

\begin{proof}
Let $\ar{\hat\Ass}$ be a solution to the OOS instance $(\hat\Ass, \hat\Orp)$, and $\ar{\Ass}$ be obtained from $\ar{\hat\Ass}$ by replacing every $v_i$ with $v$. We remark that $\Orp$ can be obtained from $\hat\Orp$ by similar replacement.
    
This establishes an one-to-one correspondence between the assembly points in $\ar{\hat\Ass}$ and $\ar{\Ass}$, as well as between the assembly points in $\ar{\hat\Orp}$ and $\ar{\Orp}$. It remains to show that consistent orientations are invariant under this correspondence.
    
We remark that $\SAG(\ar{\Ass})$ can be obtained from $\SAG(\ar{\hat\Ass})$ by replacing a sequence of edges $(r_1,v_1,r_2,v_2,\dots,r_k,v_k,r_{k+1})$, where $r_i$ are assembly edges, with a sequence of edges $(r_1,v,r_2)$. Therefore, if there exists a path in one graph proving existence of consistent orientation for some assembly point, then there exists a corresponding path in the other graph (having the same orientations of the end edges).
\end{proof}
    
\subsection{Algorithms for the NOOS and the OOS}
In this section, by Theorems~\ref{thm:splitOGA} and \ref{thm:splitOGO}, we can assume that both $\OG(\Ass)$ and $\OG(\Orp)$ are connected.

\subsubsection*{A Polynomial-Time Algorithm for NOOS}
    \begin{thm}
    \label{thm:noos}
        The NOOS is in $\P$.
    \end{thm}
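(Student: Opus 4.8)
The plan is to recast the NOOS as a weighted binary optimization problem on the graph $\COG(\Orp)$ and to solve it by dynamic programming. By Theorems~\ref{thm:splitOGA} and~\ref{thm:splitOGO} we may assume that both $\OG(\Ass)$ and $\OG(\Orp)$ are connected; since $\COG(\Orp)$ is non-branching by hypothesis, it is then either a path or a cycle, and $\OG(\Ass)$ is likewise a path or a cycle by Theorem~\ref{thm:NR1}. First I would fix, by a single traversal of $\OG(\Ass)$, a reference orientation for every scaffold, so that a realization $\ar{\Ass}\in\ncrlz(\Ass)$ is encoded by a binary vector $(o(s))_{s\in\Scafs(\Ass)}$, where $o(s)$ records whether $s$ is flipped relative to its reference. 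Here $o(s)$ is fixed for $s\in\Scafs_o(\Ass)$ and free for $s\in\Scafs_u(\Ass)$, the latter set being computable in linear time by Lemma~\ref{lem:compSu}.

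The key reduction step is to express the consistency of each assembly point of $\Orp$ as a local constraint on these variables. Since $(\rd{s_i},\rd{s_j})$ and $(\ld{s_j},\ld{s_i})$ denote the same adjacency, an oriented point of $\Orp$ on $s_i,s_j$ really prescribes the relative orientation of $s_i$ and $s_j$. I would show that, because the path (in the cyclic case, either of the two paths) between $s_i$ and $s_j$ in $\SAG(\Ass)$ imposes a fixed relative layout, such a point has a consistent orientation with $\ar{\Ass}$ if and only if $o(s_i)\oplus o(s_j)=c_p$ for a constant $c_p\in\{0,1\}$ determined by $p$ and the reference orientations. Semi-oriented points, which by Lemma~\ref{lem:consistent} may be discarded when $\OG(\Ass)$ is a cycle and only survive in the path case (where the reading direction is fixed and hence absolute orientation is meaningful), reduce analogously to single-variable constraints $o(s_i)=c_p$. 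All of the constants $c_p$ can be precomputed in time linear in $|\Ass|+|\Orp|$. For every edge of $\COG(\Orp)$ I would then accumulate a weight function giving, for each of the two values of $o(s_i)\oplus o(s_j)$, the total weight of points of $\Orp$ on that pair satisfied by that value; single-variable weights are attached to their endpoints.

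With this linearization, solving the NOOS amounts to choosing the free orientations so as to maximize the total satisfied weight, i.e.\ a weighted variant of $\MAX$-$\CUT$ whose constraint graph is the path or cycle $\COG(\Orp)$ with some vertices pinned to fixed values. This I would solve by dynamic programming along $\COG(\Orp)$. If $\COG(\Orp)=(t_1,\dots,t_m)$ is a path, I would maintain a table $D[\ell][b]$ equal to the maximum weight attainable from all constraints confined to $\{t_1,\dots,t_\ell\}$ subject to $o(t_\ell)=b$; the transition from $\ell-1$ to $\ell$ adds the edge weight of $(t_{\ell-1},t_\ell)$ together with the single-variable weight at $t_\ell$, skipping values of $b$ forbidden by pinning. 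If $\COG(\Orp)$ is a cycle, I would guess the orientation of one vertex (two cases), cut the cycle into a path, run the path DP, and finally add the weight of the closing edge. The optimal value is read off the last column, and an optimal realization $\ar{\Ass}$ is recovered by backtracking and de-referencing the orientations.

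Both the precomputation and the dynamic programming run in time linear in the input size, so the NOOS lies in $\P$. I expect the main obstacle to be the reduction step rather than the dynamic programming: rigorously proving that consistency of an assembly point is equivalent to a parity (or absolute) constraint requires careful bookkeeping of how the fixed path structure of $\SAG(\Ass)$ converts the prescribed relative orientation of a point into the constant $c_p$, and verifying that this equivalence survives both dichotomies---paths versus cycles and oriented versus semi-oriented points. Once this equivalence and the associated weight functions are established, the dynamic programming itself is routine.
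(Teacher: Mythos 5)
Your overall strategy---translate each point of $\Orp$ into a local constraint on binary orientation variables and run a dynamic program along the non-branching graph $\COG(\Orp)$---is a viable route, and it is genuinely different from the paper's proof, which instead processes $\COG(\Orp)$ by divide-and-conquer: it repeatedly splits at a middle articulation vertex (using Theorem~\ref{thm:atr_vert} when the vertex is oriented, and trying both orientations when it is not), obtaining $\bigO{|\Scafs(\Ass)|^2}$ time via the Master theorem. However, your key reduction claim is false in the path case. When $\OG(\Ass)$ is a path, the path in $\SAG(\ar{\Ass})$ between the edges $s_i$ and $s_j$ is \emph{unique}, and the point $(\rd{s_i},\rd{s_j})$ (with $s_i$ to the left of $s_j$ in the fixed reading, and references chosen along that reading) is consistent with $\ar{\Ass}$ if and only if $o(s_i)=0$ \emph{and} $o(s_j)=0$: the doubly-flipped assignment $o(s_i)=o(s_j)=1$ makes the path ends realize the different assembly point $(\ld{s_i},\ld{s_j})$, hence is inconsistent. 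So the constraint is a conjunction pinning both orientations absolutely (satisfied by one of the four assignments), not a parity constraint $o(s_i)\oplus o(s_j)=c_p$ (satisfied by two). This is exactly what the paper's own $\MAX$ $2$-$\DNF$ reduction (Lemma~\ref{thm:OOSlin}) exploits: a clause $x_i\wedge\overline{x_j}$ maps to the point $(\rd{x_i},\ld{x_j})$, and a conjunction is not an XOR. Your proposal is even internally inconsistent here: you correctly observe that in the path case a semi-oriented point yields an absolute constraint $o(s_i)=c_p$, but an oriented point refines both of its semi-oriented relaxations, so its set of consistent assignments must lie in the intersection of theirs, i.e.\ must be a single assignment---never a parity class. (Your parity claim is correct only when $\OG(\Ass)$ is a cycle, where the two paths between $s_i$ and $s_j$ supply the two complementary assignments; cf.\ the $\MAX$-$\CUT$ reduction of Lemma~\ref{thm:OOScir}.)

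The damage is localized, because nothing in your dynamic program actually requires the edge weights to be parity functions. If each edge table of $\COG(\Orp)$ is indexed by all four orientation pairs $\left(o(s_i),o(s_j)\right)$ (and each vertex table by the two orientations), with each entry holding the total weight of points of $\Orp$ on that pair that are consistent under that assignment, then the same path/cycle DP with pinned vertices goes through unchanged and still runs in polynomial (indeed near-linear) time---which would in fact be faster than the paper's quadratic divide-and-conquer. But as written, the reduction step that you yourself identify as the crux of the argument is wrong, so the proof has a genuine gap.
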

    \begin{proof}

    Since $\COG(\Orp)$ is non-branching, and we consider two cases depending on whether it is a path or a cycle.

    If $\COG(\Orp)$ is a path, then every vertex in it is an articulation vertex in both $\COG(\Orp)$ and $\OG(\Orp)$. Our algorithm will process this path in a divide-and-conquer manner. Namely, for a path of length greater than 2, we pick a vertex $v$ closest to the path middle. If $v$ is oriented, we proceed as in Theorem~\ref{thm:atr_vert}. If $v$ is unoriented, we fix each of the two possible orientations, proceed as in Theorem~\ref{thm:atr_vert}, and pick the better solution among them. 
    
    A path of length at most 2 can be solved in $\bigO{|\Orp|}$ time by brute-forcing all possible orientations of the scaffolds in the path and counting how many assembly points in $\Orp$ get consistent orientations.

    The running time $T(l)$ for recursive part of the algorithm satisfies the formula: 
    $$T(l) = 
    \begin{cases}
        4\cdot T\left(\frac{l}{2}\right) + \bigO{1}, & \text{if}\ |\Orp| > 2;\\
        \bigO{|\Orp|}, & \text{if}\ |\Orp| \leq 2.
    \end{cases}$$
    From the Master theorem~\cite{Bentley1980}, we conclude that the total running time for the proposed recursive algorithm is $\bigO{|\Orp|^2}$ (or $\bigO{|\Scafs(\Ass)|^2}$ since $\COG(\Orp)$ is a path).

    If $\COG(\Orp)$ is a cycle, we can reduce the corresponding NOOS instance to the case of a path as follows. 
    First, we pick a random vertex $w$ in $\COG(\Orp)$ and replace it with new vertices $w_1$ and $w_2$ such that the edges $\{u, w\}$, $\{w, v\}$ in $\COG(\Orp)$ are replaced with $\{u, w_1\}$, $\{w_2, v\}$. 
    Then we solve the NOOS for the resulting path one or two times (depending on whether $w\in\Scafs_o(\Ass)$): once for each of possible orientations of scaffold $w$ (inherited by $w_1$ and $w_2$), 
    and then select the orientation for $w$ that produces a better result.
    \end{proof}
    
    A pseudocode for the algorithm described in the proof of Theorem~\ref{thm:noos} is given in Algorithm~\ref{algo:noos} in the Appendix.
    
\subsubsection*{An exact algorithm for the OOS}
Below we show how to solve OOS instance $(\Ass, \Orp)$ in general case, i.e., when $\COG(\Orp)$ is neither a path or a cycle. 

First we assume that there are no articulation vertices in the $\OG(\Orp)$, while the case when articulation vertices are present is addressed in the next section. Let $\BV(\Orp)$ be the set of unoriented \emph{branching vertices} (i.e., unoriented vertices of degree greater than 2) in $\COG(\Orp)$. We define a non-branching path as a path for which the endpoints are in $\BV(\Orp)$, and all internal vertices have degree 2 (e.g., $\{s_{18}, s_{23}, s_{24}, s_{25}\}$ is a non-branching path in Fig.~\ref{fig:cog_spy_cd}a). Similarly, we define a non-branching cycle as a cycle in which all vertices have degree 2, except for one vertex (called \emph{endpoint}) that belongs to $\BV(\Orp)$ and thus has degree greater than 2 (e.g., $\{s_7, s_4, s_3, s_1, s_2, s_6, s_5, s_7\}$ is a non-branching cycle in Fig.\ref{fig:cog_spy_cd}a). 

Each OOS instance induced by a non-branching path and a non-branching cycle in $\COG(\Orp)$ represents an NOOS instance, and thus can be solved in polynomial time. We iterate over all possible orientations for the endpoints of the underlying paths/cycles in the corresponding NOOS instances and solve them. A solution to the OOS instance is obtained by iterating over all possible orientations of the scaffolds represented by branching vertices in $\COG(\Orp)$ (i.e., $\BV(\Orp)$) and merging the solutions to the corresponding NOOS instances, and picking the best result. Then, the following lemma trivially holds: 

\begin{lem}
\label{lem:general_complexity}
\sloppy
The running time for the proposed algorithm is bounded by 
$
\bigO{2^{|\BV(\Orp)|}\cdot |\Scafs(\Ass)|^2}.
$
\end{lem}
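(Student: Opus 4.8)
The plan is to read the stated running time as a product of the two nested loops in the proposed algorithm: an outer enumeration over orientations of the branching vertices of $\COG(\Orp)$, and, for each fixed such orientation, an inner phase that solves the NOOS instances induced by the non-branching paths and non-branching cycles of $\COG(\Orp)$ and then merges their solutions. I would bound these two factors separately and take their product.

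For the outer loop, I would observe that oriented branching vertices inherit a fixed orientation from $\Ass$, whereas each unoriented branching vertex admits exactly two orientations; hence the algorithm enumerates at most $2^{|\BV(\Orp)|}$ orientation assignments of the scaffolds in $\BV(\Orp)$. Once such an assignment is fixed, every non-branching path and every non-branching cycle has its endpoint orientations determined and therefore constitutes a genuine NOOS instance, which by Theorem~\ref{thm:noos} is solved in time quadratic in the number $n_i$ of scaffolds it involves.

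The remaining — and only nontrivial — step is to show that, for a single fixed assignment, the inner cost $\sum_i \bigO{n_i^2}$ is already $\bigO{|\Scafs(\Ass)|^2}$; multiplying by the $2^{|\BV(\Orp)|}$ outer iterations (the linear-time merge of the per-piece solutions being absorbed into the quadratic term) then gives the claimed bound. The subtlety I expect to be the main obstacle is that the crude estimate $\sum_i n_i^2 \le (\sum_i n_i)^2$ is too lossy here, since a branching vertex that serves as an endpoint of many pieces is overcounted in $\sum_i n_i$, which need not be linear in $|\Scafs(\Ass)|$. Instead I would split each $n_i$ into its interior degree-two vertices and its at most two endpoints: the interiors of distinct pieces are disjoint, so their counts sum to at most $|\Scafs(\Ass)|$ and the sum of their squares is $\bigO{|\Scafs(\Ass)|^2}$; the endpoints contribute only $\bigO{1}$ per piece to the sum of squares, and the number of pieces is at most $|E(\COG(\Orp))| = \bigO{|\Scafs(\Ass)|^2}$ because the pieces partition the edges of $\COG(\Orp)$. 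Combining these two contributions yields $\sum_i \bigO{n_i^2} = \bigO{|\Scafs(\Ass)|^2}$, and hence the overall bound $\bigO{2^{|\BV(\Orp)|}\cdot |\Scafs(\Ass)|^2}$.
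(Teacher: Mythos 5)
Your proof is correct and follows essentially the same route as the paper, which treats the lemma as immediate from the algorithm's structure: enumerate the $2^{|\BV(\Orp)|}$ orientations of the unoriented branching vertices and solve each induced non-branching piece as an NOOS instance via Theorem~\ref{thm:noos}. The only difference is bookkeeping --- the paper solves each piece at most four times up front (once per combination of endpoint orientations) and only merges precomputed answers inside the enumeration, whereas you re-solve the pieces in every iteration; both variants fit the stated bound, and your explicit verification that $\sum_i n_i^2 = \bigO{|\Scafs(\Ass)|^2}$ (via the split into disjoint interiors and the at most $\bigO{|\Scafs(\Ass)|^2}$ edge-disjoint pieces) makes rigorous a step the paper leaves implicit.
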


\subsubsection*{An $\FPT$ algorithm for the OOS}
Thanks to Theorem~\ref{thm:atr_vert}, we can partition a given OOS instance $(\Ass, \Orp)$ into subinstances using the oriented articulation vertices. By Theorem~\ref{thm:noos}, we also know how to efficiently orient scaffolds that correspond to unoriented articulation vertices of degree 2. In this section, we address the remaining type of articulation vertices, namely unoriented articulation vertices of degree at least 3.  

Let $\AV(\Orp)\subseteq\BV(\Orp)$ be the set of unoriented articulation vertices of degree at least 3. A straightforward solution to this problem is to iterate over all possible $2^{|AV(\Orp)|}$ orientations of the scaffolds in $\AV(\Orp)$, and then use Theorem~\ref{thm:atr_vert} to partition the OOS instance $(\Ass, \Orp)$ into subinstances. Each such subinstance, in turn, can be solved using Theorem~\ref{thm:noos} or Theorem~\ref{lem:general_complexity}. Below we show how one can orient the scaffolds in $\AV(\Orp)$ more efficiently based on the dependencies between the connected subgraphs flanked by the corresponding vertices.

The set $\AV(\Orp)$ defines a set $\Con(\Orp)$ of connected subgraphs (\emph{components}) of $\COG(\Orp)$ by breaking it at the vertices from $\AV(\Orp)$, introducing copies of each articulation vertex in the resulting components (Fig.~\ref{fig:cog_spy_cd}a). We distinguish between two types of components in $\Con(\Orp)$:
\begin{itemize}
\item \emph{path bridges} forming the set $\NBPB(\Orp)\subseteq\Con(\Orp)$, i.e., components that do not contain cycles (e.g., $pb_1$ in Fig.~\ref{fig:sag_og_cog}a);
\item \emph{complex components} forming the set $\CC(\Orp)\subseteq\Con(\Orp)$, i.e., components that contain at least one cycle (e.g., $cc_2$ in Fig.~\ref{fig:cog_spy_cd}a).
\end{itemize}
Trivially we have $\CC(\Orp)\cup\NBPB(\Orp)=\Con(\Orp)$. We denote by $V(c)$ the set of vertices in a component $c \in \Con(\Orp)$. Now, we show how to solve the OOS instances induced by elements of $\Con(\Orp)$:

\paragraph*{Case $c\in\NBPB(\Orp)$} The OOS instance induced by $c$ can be solved as follows. We iterate over all possible orientations of the unoriented articulation vertices in $c$ (i.e., we need solve the OOS instance induced by $c$ at most 4 times). For each fixed orientation, since $c$ is non-branching, the OOS instance induced by $c$ is an instance of NOOS and can be solved as in Theorem~\ref{thm:noos}.

\paragraph*{Case $c\in\CC(\Orp)$} The OOS instance induced by $c$ can be solved as follows. We iterate over all possible orientations of the vertices in $\AV(\Orp)\cap V(c)$. For each fixed orientation, a solution to the OOS instance induced by $c$ can be obtained as in Theorem~\ref{lem:general_complexity} by iterating over all possible orientations of the scaffolds represented by the unoriented branching vertices in $c$ (i.e., $(\BV(\Orp) \setminus \AV(\Orp)) \cap V(c)$). 

    
\begin{figure}[t!]
\includegraphics[width=.9\textwidth]{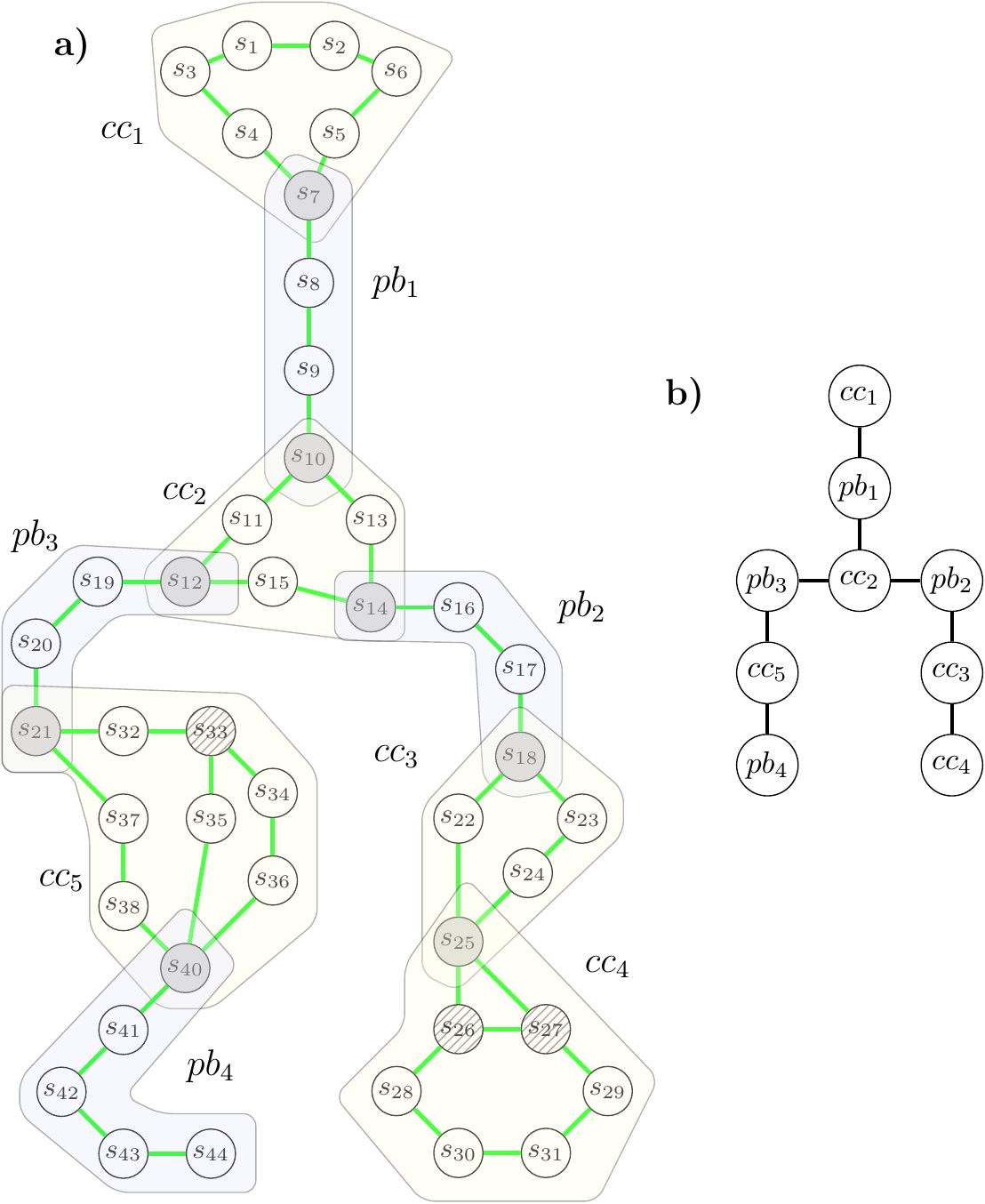}
\caption{\textbf{a)} Contracted ordered graph $\COG(\Orp)$ of a set of assembly points $\Orp$. Branching articulation vertices $\AV(\Orp) = \{s_7, s_{10}, s_{12}, s_{14}, s_{21}, s_{40}\}$ are shown as filled with gray. Branching vertices that are not articulation vertices $\BV(\Orp)\setminus\AV(\Orp) = \{s_{33}, s_{26}, s_{27}\}$ are shown as filled with line pattern. Yellow areas highlight elements of $\CC(\Orp) = \{cc_1, cc_2, cc_3, cc_4, cc_5\}$. Blue areas highlights elements of $\NBPB(\Orp) = \{pb_1, pb_2, pb_3, pb_4\}$. \textbf{b)} The subproblem tree $\ST(\Orp)$. 
}
\label{fig:cog_spy_cd}     
\end{figure}

Now, we outline how we iterate over the orientations of scaffolds in $\AV(\Orp)$. Our algorithm constructs a \emph{subproblem tree} $\ST(\Orp) = (V, E)$ (Fig.~\ref{fig:cog_spy_cd}b), where $V=\Con(\Orp)$ is the set of vertices corresponding to the set of components induced by $\AV(\Orp)$, and $E$ is the set of edges constructed iteratively. We start with $E=\emptyset$ and populate $E$ as follows: for each vertex $v\in V$ and all vertices $u\in V$, add an edge $\{ v, u \}$ if the following two conditions hold:
\begin{enumerate}
\item $v$ and $u$ share an articulation vertex in $\COG(\Orp)$ (e.g., $cc_2$ and $pb_1$ in Fig.~\ref{fig:sag_og_cog}a); and
\item $u$ is not an endpoint of any edge in $E$.
\end{enumerate}

A subproblem tree $\ST(\Orp)$ allows us to solve the original OOS instance in the bottom-up fashion. Indeed, the OOS instance corresponding to any disjoint subtrees of $\ST(\Orp)$ can be solved independently. We start with solving OOS instances that correspond to the leaves, producing solutions corresponding to different orientations of the scaffolds corresponding to articulation vertices. When the OOS instances for all children of an internal vertex $c$ in $\ST(\Orp)$ are solved, we iterate over the orientations for the scaffolds that correspond to articulation vertices in $c$ (i.e., $\AV(\Orp)\cap V(c)$) and merge the OOS solutions for $c$ with the corresponding solutions for its children. Eventually, we obtain the OOS solution for the root of $\ST(\Orp)$ and thus for the original OOS problem.

    

The following theorem states the running time of the proposed algorithm.

\begin{thm}
\label{thm:complexity}
The running time for the proposed algorithm for solving OOS instance $(\Ass,\Orp)$ is bounded by 
\begin{equation}\label{eq:complexity}
\bigO{2^{\alpha} \cdot |\Scafs(\Ass)|^2 \cdot |\CC(\Orp)|},
\end{equation}
where $\alpha=\max_{c\in\CC(\Orp)} |\BV(\Orp)\cap V(c)|$.
\end{thm}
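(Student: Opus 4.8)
The plan is to charge the total running time to the nodes of the subproblem tree $\ST(\Orp)$, splitting the cost at each component $c\in\Con(\Orp)$ into two parts: (i) the cost of solving the OOS subinstance induced by $c$ over all relevant orientations of its incident articulation vertices, and (ii) the cost of merging the precomputed child solutions during the bottom-up sweep. The crucial structural fact I would lean on is that $\ST(\Orp)$ lets us combine a child's solution with its parent through the shared articulation vertices one component at a time, so the orientations of $\AV(\Orp)$ are never enumerated jointly across all components; the naive global factor $2^{|\AV(\Orp)|}$ thus collapses into a per-component factor. I would first bound the solving cost, then the merging cost, and finally sum both over the tree.

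For a complex component $c\in\CC(\Orp)$, the algorithm enumerates the orientations of $\AV(\Orp)\cap V(c)$ and, for each, applies the procedure behind Lemma~\ref{lem:general_complexity} to the remaining unoriented branching vertices $(\BV(\Orp)\setminus\AV(\Orp))\cap V(c)$. Since $\AV(\Orp)\subseteq\BV(\Orp)$, the two nested enumerations together range over all orientations of $\BV(\Orp)\cap V(c)$, so fully solving $c$ costs $\bigO{2^{|\BV(\Orp)\cap V(c)|}\cdot|\Scafs(\Ass)|^2}\le\bigO{2^{\alpha}\cdot|\Scafs(\Ass)|^2}$ by the definition of $\alpha$. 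For a path bridge $c\in\NBPB(\Orp)$, only its (at most two) articulation endpoints must be enumerated, yielding at most four orientations, and each resulting instance is an NOOS instance solved in $\bigO{|\Scafs(\Ass)|^2}$ by Theorem~\ref{thm:noos}; hence a path bridge costs $\bigO{|\Scafs(\Ass)|^2}$.

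Summing the solving costs over the $|\CC(\Orp)|$ complex components yields the claimed leading term $\bigO{2^{\alpha}\cdot|\Scafs(\Ass)|^2\cdot|\CC(\Orp)|}$, so what remains is to show that the path bridges and all bottom-up merges do not exceed this bound. For merging I would note that every edge of $\ST(\Orp)$ corresponds to shared articulation vertices, so combining a child's solution table into its parent costs only $\bigO{1}$ per enumerated orientation and is therefore dominated by the solving cost already charged to the incident components. The genuinely delicate point is the aggregate path-bridge cost $\bigO{|\NBPB(\Orp)|\cdot|\Scafs(\Ass)|^2}$: because each path bridge attaches to the rest of the instance only at vertices of $\AV(\Orp)$ (the branching articulation vertices it shares with its neighbors in $\ST(\Orp)$), I would use the tree structure to charge each path bridge to an incident complex component, i.e. bound $|\NBPB(\Orp)|$ in terms of $|\CC(\Orp)|$ together with the $2^{\alpha}$ slack, so that the path-bridge total is absorbed into $\bigO{2^{\alpha}\cdot|\Scafs(\Ass)|^2\cdot|\CC(\Orp)|}$. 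I expect this bounding of the number of path-bridge subproblems relative to the complex components to be the main obstacle, as it is the one place where the FPT parameter $\alpha$ and the factor $|\CC(\Orp)|$ genuinely interact.
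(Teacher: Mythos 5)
Your overall architecture---solving the OOS subinstance induced by each component of $\Con(\Orp)$ over the orientations of its incident articulation vertices, then merging bottom-up along $\ST(\Orp)$---is the same as the paper's, and your treatment of complex components is sound: nesting the enumeration of $\AV(\Orp)\cap V(c)$ outside the enumeration of $(\BV(\Orp)\setminus\AV(\Orp))\cap V(c)$ from Lemma~\ref{lem:general_complexity} gives a factor $2^{|\BV(\Orp)\cap V(c)|}\le 2^{\alpha}$ per complex component, and summing over $|\CC(\Orp)|$ components yields the leading term. Your merging estimate also matches the paper, which bounds the bottom-up sweep by $|\Con(\Orp)|$ times a per-node merge cost of $\bigO{2^{|\AV(\Orp)\cap V(c)|}\cdot\deg(c)}$.

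The genuine gap is exactly the point you flagged as the main obstacle: the path bridges. You charge each $c\in\NBPB(\Orp)$ a cost of $\bigO{|\Scafs(\Ass)|^2}$ and then hope to absorb the total by bounding $|\NBPB(\Orp)|$ in terms of $2^{\alpha}\cdot|\CC(\Orp)|$. No such bound exists: if $\COG(\Orp)$ is a star with one unoriented branching articulation vertex and $r$ path legs (or any tree built from paths), then $|\NBPB(\Orp)|=r$ while $\CC(\Orp)$ is empty, and attaching a single small cycle makes $|\CC(\Orp)|=1$ with $\alpha$ constant, so $|\NBPB(\Orp)|$ is unbounded relative to $2^{\alpha}\cdot|\CC(\Orp)|$. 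The paper avoids this with a sharper per-bridge accounting rather than a counting argument: the NOOS algorithm of Theorem~\ref{thm:noos} runs in time quadratic in the size of the instance it is handed, not in $|\Scafs(\Ass)|$, so a path bridge (or non-branching cycle) with $l_c$ vertices costs $\bigO{l_c^2}$, times at most $4$ choices of endpoint orientations. Since the bridges are vertex-disjoint except for shared endpoints, $\sum_c l_c = \bigO{|\Scafs(\Ass)|}$, hence $\sum_c l_c^2 \le (\sum_c l_c)^2 = \bigO{|\Scafs(\Ass)|^2}$; thus \emph{all} path bridges and non-branching cycles together contribute a single additive $\bigO{|\Scafs(\Ass)|^2}$ term, and no charging to complex components is needed. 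Replacing your absorption argument with this sum-of-squares accounting closes the gap; the rest of your proof stands.
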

\begin{proof}
The construction time of $\SAG(\Ass)$, $\AV(\Orp)$, $\BV(\Orp)$, $\OG(\Orp)$, $\COG(\Orp)$, $\ST(\Orp)$, and $\Con(\Orp)$ is bounded by $\bigO{|\Scafs(\Ass)|^2}$. 

The OOS instances induced by each non-branching path or cycle in $\COG(\Orp)$ are solved at most $4$ times for different orientations of the endpoints. By Theorem~\ref{thm:noos}, the total running time for processing all non-branching paths/cycles in $\COG(\Orp)$ is bounded by $\bigO{|\Scafs(\Ass)|^2}$.

By Lemma~\ref{lem:general_complexity}, each OOS instance induced by a complex component $c\in\CC(\Orp)$ can be solved in $\bigO{2^m\cdot |\Scafs(\Ass)|^2}$ time, where $m=|(\BV(\Orp) \setminus \AV(\Orp))\cap V(c)|$. 
The running time of the bottom-up algorithm is bounded by $|\Con(\Orp)|$ (i.e., the number of vertices in $\ST(\Orp)$) times the running time of the merging procedure bounded by $\bigO{2^{|\AV(\Orp)\cap V(c)|}\cdot \deg(c)}$, where $\deg(c)$ is the degree of $c$ in $\ST(\Orp)$.

Thus, the proposed algorithm can be bounded by $\bigO{2^{\alpha} \cdot |\Scafs(\Ass)|^2 \cdot |\CC(\Orp)|}$, where $\alpha=\max_{c\in\CC(\Orp)} |\BV(\Orp)\cap V(c)|$. 
\end{proof}

The proposed algorithm is an $\FPT$ algorithm. Indeed, instead of finding the best orientation by iterating over all possible orientations of the scaffolds in $\Scafs_u(\Ass)$, we iterate over all possible orientations of the scaffolds that correspond to branching vertices in $\COG(\Orp)$. Furthermore, we reduced running time of an $\FPT$ algorithm by partitioning the problem into connected components and solving them independently. 

The exponential term in \eqref{eq:complexity} accounts for the number of articulation vertices in the complex components of $\COG(\Orp)$. For real data, the exponent can become large only if many scaffolds have relative orientation with respect to three or more other scaffolds, which we expect to be a rare situation, especially when the scaffolds are long (e.g., produced by scaffolders combining paired-end and long-read data, a popular approach for the genome assembly).
    
\section{Conclusions}
In the present study, we posed the orientation of ordered scaffolds (OOS) problem as an optimization problem based on given weighted orientations of scaffolds and their pairs. We further addressed it within the earlier introduced \CAMSA framework~\cite{Aganezov2017}, taking advantage of the simple yet powerful concept of assembly points describing (semi-/un-) oriented adjacencies between scaffolds. This approach allows one to uniformly represent both orders of oriented and/or unoriented scaffolds and orientation-imposing data.

We proved that the OOS problem is $\NP$-hard when the given scaffold order represents a linear or circular genome. We also described a polynomial-time algorithm for the special case of non-branching OOS (NOOS), where the orientation of each scaffold is imposed relatively to at most two other scaffolds. Our algorithm for the NOOS problem and Theorems~\ref{thm:splitOGA}, \ref{thm:splitOGO}, and \ref{thm:atr_vert} further enabled us to develop an $\FPT$ algorithm for the general OOS problem. The proposed algorithms are implemented in the \CAMSA software version 2.

\section*{Funding}
The work of SA is supported by the National Science Foundation under grant DBI-1350041, National Institute of Health under grant R01-HG006677, and Bill and Melinda Gates Foundation. 
The work of PA and YR are partially supported by the National Science Foundation under grant DMS-1406984.
The work of NA is supported by the Government of the Russian Federation under the ITMO Fellowship and Professorship Program. 

\section*{Acknowledgments}
Partial preliminary results of the present work appeared in the proceedings of the $15^\text{th}$ Annual RECOMB Satellite Workshop on Comparative Genomics~\cite{aganezov2017a}. 

\bibliographystyle{acm}
\bibliography{library.bib}

\clearpage
\newpage

\section*{Appendix. Pseudocodes}
In the algorithms below we do not explicitly describe the function \textsc{OrConsCount}, which takes 4 arguments: 
\begin{enumerate}
    \item a subgraph $c$ from $\COG(\Orp)$ with 1 or 2 vertices;
    \item a hash table $so$ with scaffolds as keys and their orientations as values;
    \item a set of orientation imposing assembly points $\Orp$;
    \item an assembly $\Ass$ 
\end{enumerate}
and counts the assembly points from $\Orp$ that have consistent orientation with $\Ass$ in the case where scaffold(s) corresponding to vertices from $c$ were to have orientation from $so$ in $\Ass$. With simple hash-table based preprocessing of $\Ass$ and $\Orp$  this function runs in $\bigO{n}$ time, where $n$ is a number of assembly points in $\Orp$ involving scaffolds that correspond to vertices in $c$. So, total running time for all invocations of this function will be $\bigO{|\Orp|}$ (i.e., $\bigO{|\Scafs(\Ass)|^2}$).

\begin{algorithm}
\caption{Solving the NOOS for complex component}
\label{algo:noos}
\begin{algorithmic}[1]
    \Function{SolveNOOSforCC}{$c$, $\Orp$, $so$, $\Ass$}
    \State $score \gets\ 0$
    \If{$c$ has less than 3 vertices}
    \State \textbf{return} \textsc{OrConsCount}($c$, $so$, $\Orp$, $\Ass$)
    \EndIf
    \State $p_1, p_2\gets$ split $c$ into two paths of equal length, at vertex $s$
    \State $var\gets$ empty hash table
    \For{$or$ in $\{\rightarrow, \leftarrow\}$}
            \State $so[s]\gets or$
            \State $var[or]$ = \textsc{SolveNOOSforCC}($p_1$, $\Orp$, $so$)
            \State $var[or]$ += \textsc{SolveNOOSforCC}($p_2$, $\Orp$, $so$)
            \State $var[or]$ += \textsc{OrConsCount}($s$, $so$, $\Orp$, $\Ass$)
    \EndFor
    \State $score\gets$ maximum value in $var$
    \State $or\gets$ key corresponding to the maximum value in $var$
    \State $so[s]\gets or$
    \State \textbf{return} $score$
    \EndFunction
\end{algorithmic}
\end{algorithm}
    
\begin{algorithm}
\caption{Solving the NOOS for instance $(\Ass, \Orp)$ }
\begin{algorithmic}[1]
    \State
    \Function{SolveNOOS}{$\Ass$, $\Orp$}
    \State $so\gets$ hash table with $\Scafs(\Ass)$ as keys and their orientation as values
    \State $cog\gets$ $\COG(\Orp_o)$
    \State $var\gets$ empty hash table
    \For{each connected component $c$ in $cog$}
        \If{$c$ is a cycle}
            \State pick a random vertex $w$ in $c$
            \State remove two edges $\{u, w\}$, $\{w, v\}$ from $c$
            \State add two edges $\{u, w_1\}$, $\{w_2, u\}$ to $c$
            \For{$or$ in $\{\rightarrow, \leftarrow\}$}
                \State $so[w_1], so[w_2]\gets or, or$
                \State $var[(w, or)]\gets$ \textsc{SolveNOOSforCC}($c$, $\Orp$, $so$, $\Ass$)
            \EndFor
        \Else
            \State $s_1, s_2\gets$ scaffolds, corresponding extremities of $c$
                \For{$or_1$ in $\{\rightarrow, \leftarrow\}$}
                \For{$or_2$ in $\{\rightarrow, \leftarrow\}$}
                    \State $so[s_1], so[s_2]\gets or_1, or_2$
                    \State $r\gets$ \textsc{SolveNOOSforCC}($c$, $\Orp$, $so$, $\Ass$)
                    \State $var[(s_1, or_1)], var[(s_2, or_2)]\gets r, r$ 
                \EndFor
            \EndFor
        \EndIf
    \EndFor
    \State \textbf{return} $var$
    \EndFunction
\end{algorithmic}

\end{algorithm}

\begin{algorithm}
\caption{Checking if a given assembly has a non-conflicting realization}
\label{algo:checkNCR}
\begin{algorithmic}[1]
\Procedure{HasNonConflictingRealization}{$\Ass$}
\State $t\gets$ hash table with $\Scafs(\Ass)$ as keys and empty linked lists as values
\For{$p$ in $\Ass$}
    \State append $p$ to the list at $t$[$\sn(p, 1)$]
    \State append $p$ to the list at $t$[$\sn(p, 2)$]
    \If{length of list at $t$[$\sn(p, 1)$] $> 2$ \textbf{or} length of list at $t$[$\sn(p, 2)$] $> 2$}
        \State \textbf{return} False
    \EndIf
\EndFor
\For{$s$ in $\Scafs(\Ass)$}
    \If{length of list at $t$[$s$] = 2 \textbf{and} assembly points in $t$[$s$] are conflicting }
    \State \textbf{return} False
    \EndIf
\EndFor
\State \textbf{return} True
\EndProcedure
\end{algorithmic}
\end{algorithm}

\begin{algorithm}
\caption{Computing $\Scafs_u(\Ass)$}
\label{algo:computingSu}
\begin{algorithmic}[1]
\Procedure{UnorientedScaffolds}{$\Ass$}
\State $S\gets$ empty set
\If{not HasNonConflictingRealization($\Ass$)}
    \State \textbf{return} $S$
\EndIf
\State $t\gets$ hash table with $\Scafs(\Ass)$ as keys and empty linked lists as values
\For{$p$ in $\Ass$}
    \State append $p$ to the list at $t$[$\sn(p, 1)$]
    \State append $p$ to the list at $t$[$\sn(p, 2)$]
\EndFor
\For{$s$ in $\Scafs(\Ass)$}
    \State $flag\gets$ True
    \For{$p$ in the list at $t$[$s$]}
        \If{$p$ involves $s$ with orientation}    
            \State $flag\gets$ False
        \EndIf
    \EndFor
    \If{$flag$}
        \State add $s$ to $S$
    \EndIf
\EndFor
\State \textbf{return} $S$
\EndProcedure
\end{algorithmic}
\end{algorithm}

\end{document}